\DeclareSymbolFontAlphabet{\amsmathbb}{AMSb}%
\definecolor{cblue}{rgb}{0.16, 0.32, 0.75}
\definecolor{cred}{rgb}{0.7, 0.11, 0.11}
\def\oper{{\mathchoice{\rm 1\mskip-4mu l}{\rm 1\mskip-4mu l}
		{\rm 1\mskip-4.5mu l}{\rm 1\mskip-5mu l}}}
\renewcommand{\i}{\mathrm{i}}
\newcommand{\e}{\mathrm{e}}
\newcommand{\g}{\mathrm{g}}
\newcommand{\ee}{\mathrm{ee}}
\renewcommand{\gg}{\mathrm{gg}}
\newcommand{\eg}{\mathrm{eg}}
\renewcommand{\ge}{\mathrm{ge}}
\newcommand{\hilb}{\mathcal{H}}
\newcommand{\hilbe}{\mathcal{H}_\e}
\newcommand{\hilbg}{\mathcal{H}_\g}
\newcommand{\tr}{\operatorname{Tr}}
\newcommand{\ketbra}[2]{| #1 \rangle\!\langle #2 | }
\newcommand{\bh}{\mathcal{B}(\mathcal{H})}
\newcommand{\bhe}{\mathcal{B}(\mathcal{H}_\e)}
\newcommand{\bhg}{\mathcal{B}(\mathcal{H}_\g)}
\renewcommand{\Re}{\mathop{\mathrm{Re}}}
\newcommand{\X}{\mathsf{X}}
\newtheorem{theorem}{Theorem}[section]
\newtheorem{proposition}{Proposition}[section]
\newtheorem{corollary}{Corollary}[section]
\newtheorem{lemma}{Lemma}[section]
\theoremstyle{remark}\newtheorem{remark}{Remark}[section]
\theoremstyle{remark}\newtheorem{example}{Example}[section]
\newcommand{\blockmatrix}[4]{\left[
	\begin{array}{c|c}
		#1 & #2 \\\hline #3 & #4
	\end{array}\right]
}
\begin{document}	
	\title{
		\textbf{Excitation-damping quantum channels}
	}
	
	\author[$\hspace{0cm}$]{Davide Lonigro$^{1,2,}$\footnote{davide.lonigro@ba.infn.it}}
	\affil[$1$]{\small Dipartimento di Fisica and MECENAS, Universit\`{a} di Bari, I-70126 Bari, Italy}
	\affil[$2$]{\small INFN, Sezione di Bari, I-70126 Bari, Italy}
	
	\author[$\hspace{0cm}$]{Dariusz Chru\'sci\'nski$^{3,}$\footnote{darch@fizyka.umk.pl}}
	\affil[$3$]{\small Institute of Physics, Faculty of Physics, Astronomy and Informatics, Nicolaus Copernicus University, Grudziadzka 5/7, 87-100 Toru\'n, Poland}
	
	\maketitle
	\vspace{-0.5cm}	
	
	\begin{abstract}		
		We study a class of quantum channels describing a quantum system, split into the direct sum of an excited and a ground sector, undergoing a one-way transfer of population from the former to the latter; this construction, which provides a generalization of the amplitude-damping qubit channel, can be regarded as a way to upgrade a trace non-increasing quantum operation, defined on the excited sector, to a possibly trace preserving operation on a larger Hilbert space. We provide necessary and sufficient conditions for the complete positivity of such channels, and we also show that complete positivity is equivalent to simple positivity whenever the ground sector is one-dimensional. Finally, we examine the time-dependent scenario and characterize all CP-divisible channels and Markovian semigroups belonging to this class.
	\end{abstract}
	
	\maketitle
	
	\section{Introduction}
	
	Quantum channels, represented by completely positive and trace preserving (CPTP) linear maps, are nowadays key objects in modern quantum information theory~\cite{QIT}. They represent physically legitimate state transformations which are consistent with the structure of composite quantum systems, that is, the tensor product of two channels $\Phi_A$ and $\Phi_B$  operating separately on the systems ``$A$'' and ``$B$'' defines a quantum channel $\Phi_A \otimes \Phi_B$ operating on the composite ``$AB$'' system. This property is in general violated by maps which are only positive but not completely positive: in such a case, even if $\Phi_A$ and $\Phi_B$ safely transform states of systems ``$A$'' and ``$B$'', the tensor product needs not be positive and hence, in general, fails to properly transform entangled states of the ``$AB$'' system.
	
	A quantum channel provides a powerful generalization of a unitary map $\rho \to \mathcal{U}(\rho) = U\rho U^\dagger$, and hence it allows one to properly represent a quantum evolution beyond the Schr\"odinger unitary scenario. Any physically legitimate evolution of an open quantum system~\cite{Open1,Open2} may be represented by a dynamical map, i.e.~a family of quantum channels $\{\Lambda_t\}_{t \geq 0}$; in particular, a Markovian semigroup is represented by $\Lambda_t = \e^{t \mathcal{L}}$, where $\mathcal{L}$ stands for the Gorini--Kossakowski--Lindblad--Sudarshan (GKLS) generator~\cite{GKS,L} (cf.~\cite{ALICKI} for a detailed exposition and~\cite{40-GKLS} for historical remarks). Recently, the evolution of open systems beyond the Markovian semigroup scenario has been attracting considerable attention both from a theoretical and experimental point of view (cf.~\cite{NM1,NM2,NM3} and~\cite{Piilo-I,Piilo-II}); in particular, various inequivalent concepts and measures of non-Markovianity have been proposed. The whole hierarchy of different approaches was recently analyzed in great detail in~\cite{NM4} (see also the recent tutorial~\cite{Kavan}).
	
	In this paper we discuss a class of quantum channels which provides a generalization of the well-known amplitude-damping qubit channel as well as its multilevel version studied in~\cite{Davide-1}. We will denote such channels as {\em excitation-damping quantum channels}. The essential ingredient of the construction is a splitting of the Hilbert space $\hilb$ of the system as the direct sum of an ``excited'' sector $\hilbe$ and a ``ground'' sector $\hilbg$, i.e.~$\hilb = \hilbe \oplus \hilbg$. Excitation-damping channels will be defined in such a way to involve a one-way population transfer from the excited sector to the ground one, thus justifying our notation, as well as a modulation of the coherence between the two sectors.
	
	Importantly, as we will show, such a construction may also be considered as a way to upgrade a trace non-increasing quantum operation $\phi:\bhe\rightarrow\bhe$ to a possibly trace preserving operation $\mathsf{\Phi}$, that is, a true quantum channel, on a larger Hilbert space $\hilb$ obtained from $\hilbe$ by adding additional degrees of freedom---a ground sector. This procedure is well-known when one deals with decaying unstable systems and enlarges the original Hilbert space by states which represent the decay products of unstable states~\cite{Caban,Beatrix,Benatti,Kordian,Beatrix2}. The construction we propose provides a substantial generalization of this approach. 
	
	The paper is organized as follows:
	\begin{itemize}
		\item In Section~\ref{I} we study the mathematical properties of excitation-damping maps and, in particular,  completely positive and trace preserving excitation-damping maps (excitation-damping quantum channels)~\cite{Paulsen,Stormer}. After studying the invertibility of such maps (Proposition~\ref{prop:inv}), we find necessary and sufficient conditions for their complete positivity (Theorem~\ref{thm:cp}); besides, in the case of a one-dimensional ground state, we also find necessary and sufficient conditions for positivity (Proposition~\ref{prop:p}), which turns out to be equivalent to complete positivity.
		\item In Section~\ref{II} we employ the results in the previous section to analyze dynamical maps corresponding to time-dependent excitation-damping quantum channels; in this framework, we characterize all possible dynamical semigroups (Theorem~\ref{thm:semigroups}) as well as invertible CP-divisible maps (Theorem~\ref{thm:cpdiv}) in this class, fully characterizing the corresponding generators.
	\end{itemize}
	Final considerations are outlined in Section~\ref{III}.
	
	\section{Properties of excitation-damping maps}\label{I}
	
	\subsection{Generalities}	
	Let $\hilbe$, $\hilbg$ two Hilbert spaces with dimensions $d_\e,d_\g<\infty$, and $\hilb=\hilbe\oplus\hilbg$. The most general element $\mathsf{X}\in\bh$ can be thus partitioned as
	\begin{equation}\label{eq:part}
		\mathsf{X}=\blockmatrix{X_\ee}{X_\eg}{X_\ge}{X_\gg},\qquad X_{ss'}\in\mathcal{B}(\hilb_{s'},\hilb_s),\;\;s,s'\in\{\e,\g\}.
	\end{equation}
	Given two maps $\phi:\bhe\rightarrow\bhe$, $\omega:\bhe\rightarrow\bhg$, an operator $B\in\bhe$, and $\gamma\geq0$, we consider the map $\mathsf{\Phi}:\bh\rightarrow\bh$ acting on all $\X\in\bh$, partitioned as in Eq.~\eqref{eq:part}, via
	\begin{equation}\label{eq:phi0}
		\mathsf{\Phi}(\X)=\blockmatrix{\phi(X_\ee)}{BX_\eg}{X_\ge B^\dag}{\gamma X_\gg+\omega(X_\ee)}.
	\end{equation}
	We will refer to such maps as \textit{excitation-damping maps}. As pointed out in the introduction, they involve a one-way exchange of population from the excited sector to the ground one. Clearly, such maps are trace preserving if and only if
	\begin{equation}\label{eq:tp0}
		\gamma=1\qquad\text{and}\quad\tr\phi(X_\ee)+\tr\omega(X_\ee)=\tr X_\ee\;\text{for all}\;X_\ee\in\bhe.
	\end{equation}	
	While we are largely interested in the case $\gamma=1$, for the purposes of next section it will be convenient to let $\gamma$ be arbitrary.	
	\begin{example}\label{ex:tp}
		We shall often consider a subclass of manifestly trace preserving excitation-damping maps. Take $\gamma=1$ and, given a state $\Omega\in\bhg$ with $\tr\Omega=1$, let $\omega:\bhe\rightarrow\bhg$ defined as such:\begin{equation}
			X_\ee\in\bhe\mapsto	\omega(X_\ee)=\tr\left[X_\ee-\phi(X_\ee)\right]\Omega,
		\end{equation}
		so that
		\begin{equation}\label{eq:phi2}
			\mathsf{\Phi}(\X)=\blockmatrix{\phi(X_\ee)}{BX_\eg}{X_\ge B^\dag}{X_\gg+\tr\left[X_\ee-\phi(X_\ee)\right]\Omega}.
		\end{equation}
		Clearly, this map is trace preserving independently of the particular choice of $\phi$ and $B$; such a construction is therefore a simple way to create a trace preserving map on $\bh$ starting from a generally trace non-preserving map on $\bhe$.
		
		This family of excitation-damping maps includes some examples commonly found in the literature. For example, in the qubit case ($d_\e=d_\g=1$), it includes the following maps:
		\begin{equation}\label{eq:qubit}
			\mathsf{\Phi}(\X)=\blockmatrix{|a|^2x_\ee}{bx_\eg}{b^*x_\ge}{x_\gg+\left(1-|a|^2\right)x_\ee},
		\end{equation}
		with $a,b\in\mathbb{C}$; for $b=a$ this channel corresponds to an amplitude-damping channel, while for $|a|=1$ it reduces to a phase-damping channel. More generally, if $d_\g=1$ and $\phi(X_\ee)=AX_\ee A^\dag$, with $A\in\bhe$, we get
		\begin{equation}
			\mathsf{\Phi}(\X)=\blockmatrix{AX_\ee A^\dag}{BX_\eg}{X_\ge B^\dag}{x_\gg+\tr[(\oper_\e-A^\dag A)X_\ee]},
		\end{equation}
		which, for $B=A$, reduces to the multilevel generalization of the amplitude-damping channel studied in~\cite{Davide-1}. All such models emerge naturally when taking into account the reduced dynamics induced by atom-field interactions.
	\end{example}
	
	Interestingly, excitation-damping maps are invertible under minimal assumptions:
	\begin{proposition}\label{prop:inv}
		The map $\mathsf{\Phi}:\bh\rightarrow\bh$ as in Eq.~\eqref{eq:phi0} is invertible if and only if $\gamma\neq0$ and both $\phi,B$ are invertible. In such a case, for all $\X\in\bh$
		\begin{equation}\label{eq:phinv}
			\mathsf{\Phi}^{-1}(\X)=\blockmatrix{\phi^{-1}(X_\ee)}{B^{-1}X_\eg}{X_\ge B^{\dag-1}}{\gamma^{-1}\left(X_\gg-\omega\phi^{-1}(X_\ee)\right)}.
		\end{equation}
	\end{proposition}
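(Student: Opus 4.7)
My plan is to split the proof into the two implications, and for both directions to exploit the block structure of $\mathsf{\Phi}$ heavily. Since $\bh$ is finite-dimensional, invertibility is equivalent to injectivity, so I will reason via kernels whenever convenient.

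For the \emph{if} direction, assuming $\gamma\neq 0$ and both $\phi,B$ invertible, I would define $\mathsf{\Psi}:\bh\rightarrow\bh$ to be the map described by the right-hand side of \eqref{eq:phinv} and simply verify by direct block-by-block computation that $\mathsf{\Psi}\circ\mathsf{\Phi}=\mathsf{\Phi}\circ\mathsf{\Psi}=\mathrm{id}$. The off-diagonal blocks collapse via $B^{-1}B=\oper_\e$ and $B^\dag B^{\dag-1}=\oper_\e$, and the $(\e,\e)$-block is automatic from $\phi^{-1}\phi=\mathrm{id}$. The only mildly nontrivial check is in the $(\g,\g)$-block, where the $\omega$-contributions combine, for $\mathsf{\Psi}\circ\mathsf{\Phi}$, as
\[
\gamma^{-1}\bigl(\gamma X_\gg+\omega(X_\ee)-\omega\phi^{-1}\phi(X_\ee)\bigr)=X_\gg,
\]
and analogously for $\mathsf{\Phi}\circ\mathsf{\Psi}$.

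For the \emph{only if} direction, I would suppose $\mathsf{\Phi}(\X)=0$ and read off the four block equations $\phi(X_\ee)=0$, $BX_\eg=0$, $X_\ge B^\dag=0$, and $\gamma X_\gg+\omega(X_\ee)=0$; then I would construct explicit kernel elements to rule out each of the three possible failure modes. If $\gamma=0$, any $\X$ whose only nonzero block is $X_\gg$ lies in $\ker\mathsf{\Phi}$. If $B$ is not injective, I would pick nonzero vectors $v\in\hilbe$, $w\in\hilbg$ with $Bv=0$; then $X_\eg=vw^\dag\neq 0$ sits in $\ker\mathsf{\Phi}$ with all other blocks zero, and a symmetric argument uses $X_\ge=wv^\dag$ to handle the transposed equation. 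Finally, having secured $\gamma\neq0$, if $\phi$ has a nontrivial kernel I would pick $X_\ee\in\ker\phi\setminus\{0\}$ and compensate by setting $X_\gg=-\gamma^{-1}\omega(X_\ee)$ with off-diagonal blocks zero, which again produces a nonzero $\X\in\ker\mathsf{\Phi}$.

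The whole argument is essentially mechanical, so I do not anticipate a genuine obstacle; the only care I need to take is in ordering the three obstructions in the ``only if'' direction, since the construction ruling out $\ker\phi\neq\{0\}$ presupposes $\gamma\neq0$. The constructions for $\gamma=0$ and for non-injective $B$ are independent of the remaining data, which makes the ordering natural.
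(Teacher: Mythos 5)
Your proof is correct and follows essentially the same route as the paper: direct block-by-block verification of the candidate inverse for the ``if'' direction, and explicit counterexamples to injectivity for each failure mode in the ``only if'' direction. If anything, your handling of the non-injective-$\phi$ case is more careful than the paper's, which writes the image of a matrix supported on the $(\e,\e)$ block as having vanishing $(\g,\g)$ block and thereby silently drops the $\omega(X_\ee)$ contribution that your choice $X_\gg=-\gamma^{-1}\omega(X_\ee)$ (made after securing $\gamma\neq0$, exactly the ordering you flag) correctly compensates for.
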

	\begin{proof}
		If $\gamma\neq0$ and $\phi^{-1},B^{-1}$ exist, then an immediate computation shows that the map in Eq.~\eqref{eq:phinv} is the inverse of $\mathsf{\Phi}$. If any of said conditions fail, one can immediately construct counterexamples; for instance, if $\phi$ is not invertible, i.e.~there exist two distinct $X_\ee,X_\ee'\in\bhe$ such that $\phi(X_\ee)=\phi(X_\ee')$, then
		\begin{equation}
			\mathsf{\Phi}\blockmatrix{X_\ee}{0}{0}{0}=\blockmatrix{\phi(X_\ee)}{0}{0}{0}=\blockmatrix{\phi(X_\ee')}{0}{0}{0}=\mathsf{\Phi}\blockmatrix{X_\ee'}{0}{0}{0},
		\end{equation}
		and analogously in the other cases.
	\end{proof}
	
	\subsection{Complete positivity, general case}
	A natural question is whether the excitation-damping construction, under suitable assumptions, renders $\mathsf{\Phi}$ completely positive (and, in particular, CPTP) when $\phi$ is completely positive. As we will see, a complete characterization of all choices of $\phi$ and $B$ yielding a completely positive map can be indeed reached.
	
	\begin{theorem}\label{thm:cp}
		Let $\phi:\bhe\rightarrow\bhe$, $\omega:\bhe\rightarrow\bhg$, $B\in\bhe$ and $\gamma\geq0$; let $\mathsf{\Phi}$ as in Eq.~\eqref{eq:phi0}. The following statements are equivalent:
		\begin{itemize}
			\item[(i)] $\mathsf{\Phi}$ is completely positive;
			\item[(ii)] the map $\omega$ is completely positive, and one of the following conditions hold:
			\begin{itemize}
				\item[$\bullet$]$\gamma=0$ and $B=0$;
				\item[$\bullet$] $\gamma>0$, and the map $\phi-\frac{1}{\gamma}B(\cdot)B^\dag$ is completely positive;
			\end{itemize}
			\item[(iii)] the map $\omega$ is completely positive, and there exist $\{A_\mu\}_{\mu=1,\dots,r}\subset\bhe$ and $\{\beta_\mu\}_{\mu=1,\dots,r}\subset\mathbb{C}$, with $r\leq d_\e^2$, such that
			\begin{equation}
				\phi=\sum_{\mu=1}^rA_\mu(\cdot)A_\mu^\dag,\qquad B=\sum_{\mu=1}^r\beta_\mu A_\mu\;\;\text{with}\;\;\sum_{\mu=1}^r|\beta_\mu|^2\leq\gamma.
			\end{equation}
		\end{itemize}
		Finally, the map is trace preserving if and only if Eq.~\eqref{eq:tp0} holds.
	\end{theorem}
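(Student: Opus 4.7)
The natural approach is via the Choi--Jamio\l{}kowski isomorphism: $\mathsf{\Phi}$ is completely positive if and only if its Choi matrix $C_{\mathsf{\Phi}}=(\mathsf{\Phi}\otimes\mathrm{id}_\hilb)(|\Omega\rangle\langle\Omega|)\in\bh\otimes\bh$, with $|\Omega\rangle=\sum_k|k\rangle|k\rangle$ summed over an orthonormal basis of $\hilb=\hilbe\oplus\hilbg$, is positive semidefinite. Splitting $|\Omega\rangle=|\Omega_\e\rangle+|\Omega_\g\rangle$ with $|\Omega_s\rangle\in\hilb_s\otimes\hilb_s$ the unnormalised maximally entangled vector of each sector and expanding $|\Omega\rangle\langle\Omega|$ into four terms, the block-diagonal action of $\mathsf{\Phi}$ on inputs supported in $\bhe$ or $\bhg$, together with the off-diagonal actions $BX_\eg$ and $X_\ge B^\dag$, forces $C_\mathsf{\Phi}$ to have a very sparse block structure in the natural sector decomposition of $\hilb\otimes\hilb$. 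A direct calculation shows that the only non-vanishing blocks are $C_\phi$ in $(\ee,\ee)$, $C_\omega$ in $(\ge,\ge)$, $\gamma|\Omega_\g\rangle\langle\Omega_\g|$ in $(\gg,\gg)$, $(B\otimes\oper_\e)|\Omega_\e\rangle\langle\Omega_\g|$ in $(\ee,\gg)$, and its adjoint in $(\gg,\ee)$; in particular the $(\eg,\eg)$ block vanishes and no further cross-blocks appear.

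Positivity of $C_\mathsf{\Phi}$ therefore decouples into two independent conditions: $C_\omega\geq0$, i.e.\ $\omega$ completely positive; and positivity of the $2\times2$ block operator
\[
M=\begin{pmatrix} C_\phi & (B\otimes\oper_\e)|\Omega_\e\rangle\langle\Omega_\g| \\ |\Omega_\g\rangle\langle\Omega_\e|(B^\dag\otimes\oper_\e) & \gamma|\Omega_\g\rangle\langle\Omega_\g| \end{pmatrix}.
\]
For $M$ the Schur-complement criterion applies cleanly. The bottom-right block has rank one, supported on $|\Omega_\g\rangle$, and the off-diagonal has range contained in the span of $|\Omega_\g\rangle$ by construction, so the usual range condition is automatic. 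For $\gamma>0$ the Schur complement of the bottom-right block evaluates to $C_\phi-\gamma^{-1}(B\otimes\oper_\e)|\Omega_\e\rangle\langle\Omega_\e|(B^\dag\otimes\oper_\e)$; recognising the subtracted term as the Choi matrix of the CP map $X\mapsto BXB^\dag$, the condition $M\geq0$ is equivalent to complete positivity of $\phi-\gamma^{-1}B(\cdot)B^\dag$. For $\gamma=0$ the bottom-right block vanishes, which forces the off-diagonal to vanish, hence $B=0$, and leaves the requirement $C_\phi\geq0$. This establishes (i)$\Leftrightarrow$(ii).

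For (ii)$\Leftrightarrow$(iii) I would invoke Kraus representations. Given (ii) with $\gamma>0$, any Kraus decomposition of $\phi-\gamma^{-1}B(\cdot)B^\dag$ can be completed by appending one extra Kraus operator $B/\sqrt{\gamma}$, with $\beta$-coefficients $0$ for the original operators and $\sqrt{\gamma}$ for the new one, yielding the representation in (iii). Conversely, starting from (iii) one expands
\[
\phi-\gamma^{-1}B(\cdot)B^\dag=\sum_{\mu,\nu}\bigl(\delta_{\mu\nu}-\gamma^{-1}\beta_\mu\beta_\nu^*\bigr)A_\mu(\cdot)A_\nu^\dag,
\]
and the coefficient matrix $\oper-\gamma^{-1}|\beta\rangle\langle\beta|$ is positive semidefinite precisely when $\sum_\mu|\beta_\mu|^2\leq\gamma$, which is exactly the CP condition required by (ii); the bound $r\leq d_\e^2$ is the standard Choi-rank estimate, and the $\gamma=0$ case reduces trivially. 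Finally, trace preservation is immediate from $\tr\mathsf{\Phi}(\X)=\tr\phi(X_\ee)+\gamma\tr X_\gg+\tr\omega(X_\ee)$, matching $\tr\X=\tr X_\ee+\tr X_\gg$ for all $\X$ if and only if Eq.~\eqref{eq:tp0} holds. The main technical obstacle is the Choi-matrix block computation, in particular the identification of the $(\ee,\gg)$ block as $(B\otimes\oper_\e)|\Omega_\e\rangle\langle\Omega_\g|$ and the recognition of its Schur complement as the Choi matrix of $B(\cdot)B^\dag$; once these are in hand, the remaining steps are routine.
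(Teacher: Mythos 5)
Your argument follows essentially the same route as the paper: the equivalence (i)$\Leftrightarrow$(ii) is proved there exactly as you propose, by computing the block structure of the Choi matrix with respect to the sector decomposition of $\hilb\otimes\hilb$ (the nonzero blocks you list agree with the paper's) and taking the generalized Schur complement with respect to the rank-one block $\gamma\ketbra{\Psi_\g}{\Psi_\g}$, whose range condition is indeed automatic. The only point where you diverge is (ii)$\Rightarrow$(iii): the paper's Lemma~\ref{lemma:ball} starts from an \emph{arbitrary} Kraus representation $\{A_\mu\}_{\mu=1,\dots,r}$ of $\phi$ with $r\leq d_\e^2$ and uses the unitary freedom relating it to the representation $\{B/\sqrt{\gamma}\}\cup\{A'_\nu\}$ to conclude $B=\sum_\mu u_{1\mu}\sqrt{\gamma}A_\mu$ with $\sum_\mu|u_{1\mu}|^2\leq1$, whereas you simply adopt $\{A'_\nu\}\cup\{B/\sqrt{\gamma}\}$ itself as the Kraus family. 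Your shortcut is logically sound but delivers a family of up to $d_\e^2+1$ operators, so it does not literally meet the bound $r\leq d_\e^2$ stated in (iii); passing to a minimal representation while controlling the coefficients of $B$ requires precisely the isometry/unitary argument of the paper's lemma, which also yields the stronger fact (used later for the ball $\mathcal{B}_{\phi,\gamma}$) that $B$ lies in the ball spanned by \emph{any} choice of Kraus operators. This is a negligible repair; everything else, including the converse via positive semidefiniteness of $\bigl(\delta_{\mu\nu}-\gamma^{-1}\beta_\mu\beta_\nu^*\bigr)$ and the trace-preservation condition, matches the paper.
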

	Theorem~\ref{thm:cp} clarifies precisely the conditions under which $\phi$, $\omega$ and $B$ must be chosen so that $\mathsf{\Phi}$ is a CPTP map. Necessarily, $\phi$ and $\omega$ themselves must be completely positive; furthermore, the operator $B$ governing the off-diagonal part of $\X$ must be ``sufficiently small" with respect to $\phi$, in the sense that $\phi-\gamma^{-1}B(\cdot)B^\dag$ is still completely positive; as it turns out, this happens if and only if $B$ belongs to the set
	\begin{equation}\label{eq:bphi}
		\mathcal{B}_{\phi,\gamma}=\left\{B=\sum_{\mu=1}^r\beta_\mu A_\mu\in\bhe,\;\sum_{\mu=1}^r|\beta_\mu|^2\leq\gamma\right\},
	\end{equation}
	that is, the (complex) ball of radius $\sqrt{\gamma}$ spanned by the Kraus operators associated with $\phi$. In this sense, for a fixed $\phi$ with Kraus rank $r$, Theorem~\ref{thm:cp} guarantees the existence of a $r$-dimensional ``ball of completely positive maps", whose radius coincides with the square root of $\gamma$. Notice that, since the Kraus operators associated with $\phi$ are unique up to a unitary transformation, said ball is indeed independent of the particular choice of Kraus operators.
	
	\begin{remark}
		Recall that a positive map $\phi:\bhe\rightarrow\bhe$ is said to be \textit{trace non-increasing} if, for all $X_\ee\succeq0$, the inequality $\tr\phi(X_\ee)\leq\tr X_\ee$ holds. It is easy to see that such a property holds if and only if, given any orthonormal basis $\{\ket{\e_j}\}_{j=1,\dots,d_\e}\subset\hilbe$, the matrix with $(j,\ell)$th element
		\begin{equation}
			\delta_{j\ell}-\tr\phi\left(\ketbra{\e_j}{\e_\ell}\right),\qquad j,\ell=1,\dots,d_\e
		\end{equation}
		is positive semidefinite. Furthermore, $\phi$ is completely positive and trace non-increasing if and only if it admits a Kraus representation
		\begin{equation}\label{eq:krausni}
			\phi=\sum_{\mu=1}^r A_\mu(\cdot)A_\mu^\dag\quad\text{with}\quad\sum_{\mu=1}^r A_\mu^\dag A_\mu\preceq\oper_\e.
		\end{equation}
		Clearly, Eq.~\eqref{eq:tp0} implies that, in order an excitation-damping map $\mathsf{\Phi}$ to be trace preserving, $\phi$ must necessarily be trace non-increasing. Consequently, as anticipated, constructing an excitation-damping channel $\mathsf{\Phi}$ may be regarded as a way to ``promote'' the trace non-increasing map $\phi$ to a legitimate quantum channel.
	\end{remark}
	
	We remark that such a simple characterization of all choices of $\phi$ and $B$ rendering the corresponding excitation-damping map $\mathsf{\Phi}$ completely positive does not have, in general, a counterpart for positive maps; an important exception, which will be examined later on, is the case $d_\g=1$ (cf.~Subsection~\ref{subsec:dg1}).
	
	The remainder of this section will be devoted to proving Theorem~\ref{thm:cp}, and to discuss particular cases.
	\begin{lemma}\label{lemma:ball}
		Let $\phi:\bhe\rightarrow\bhe$ and $B\in\bhe$. The following properties are equivalent:
		\begin{itemize}
			\item [(i)] the map $\phi-B(\cdot)B^\dag$ is completely positive;
			\item [(ii)] there exist $\{A_\mu\}_{\mu=1,\dots,r}\subset\bhe$ and $\{\beta_\mu\}_{\mu=1,\dots,r}\subset\mathbb{C}$, with $r\leq d_\e^2$, such that
			\begin{equation}\label{eq:kraus}
				\phi=\sum_{\mu=1}^rA_\mu(\cdot)A_\mu^\dag
			\end{equation}
			and
			\begin{equation}\label{eq:ti}
				B=\sum_{\mu=1}^r\beta_\mu A_\mu,\quad\text{with}\quad\sum_{\mu=1}^r|\beta_\mu|^2\leq1.
			\end{equation}
		\end{itemize}
	\end{lemma}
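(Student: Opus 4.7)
The plan is to deduce both implications from the standard fact that two Kraus representations of the same completely positive map are related by an isometry acting on the labelling index. The constraint $\sum_\mu|\beta_\mu|^2\leq 1$ will emerge naturally as the statement that the rows of an isometry have norm at most one, while the bound $r\leq d_\e^2$ will be automatic once one works with a minimal Kraus representation.

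For the direction (ii)$\Rightarrow$(i), starting from $\phi=\sum_{\mu=1}^r A_\mu(\cdot)A_\mu^\dag$ and $B=\sum_\mu \beta_\mu A_\mu$ with $\sum_\mu|\beta_\mu|^2\leq 1$, I would enlarge the Kraus family by appending $A_{r+1}:=0$ and simultaneously set $\beta_{r+1}:=\sqrt{1-\sum_\mu|\beta_\mu|^2}\geq 0$, so that $(\beta_1,\dots,\beta_{r+1})$ becomes a unit vector in $\mathbb{C}^{r+1}$. Choosing any unitary matrix $V\in U(r+1)$ whose first row is $(\beta_1,\dots,\beta_{r+1})$ (which exists by Gram--Schmidt), the rotated operators $\tilde A_\mu:=\sum_\nu V_{\mu\nu}A_\nu$ still satisfy $\phi=\sum_{\mu=1}^{r+1}\tilde A_\mu(\cdot)\tilde A_\mu^\dag$, and by construction $\tilde A_1=B$. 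Consequently
\begin{equation}
\phi-B(\cdot)B^\dag=\sum_{\mu=2}^{r+1}\tilde A_\mu(\cdot)\tilde A_\mu^\dag,
\end{equation}
which is manifestly completely positive.

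For the converse (i)$\Rightarrow$(ii), assume $\phi-B(\cdot)B^\dag$ is completely positive and pick any Kraus decomposition $\phi-B(\cdot)B^\dag=\sum_j C_j(\cdot)C_j^\dag$, so that $\phi=B(\cdot)B^\dag+\sum_j C_j(\cdot)C_j^\dag$ is itself a (generally non-minimal) Kraus representation of $\phi$. Fixing a minimal representation $\phi=\sum_{\mu=1}^r A_\mu(\cdot)A_\mu^\dag$ with $r\leq d_\e^2$ equal to the Kraus rank, the Kraus uniqueness theorem provides a matrix $U$ with $U^\dag U=I_r$ such that the enlarged family $(B,C_1,C_2,\dots)$ is expressed as $\sum_\mu U_{k\mu}A_\mu$; in particular $B=\sum_\mu U_{1\mu}A_\mu$. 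Setting $\beta_\mu:=U_{1\mu}$ and using the fact that $UU^\dag$ is an orthogonal projection (so its diagonal entries are bounded by one), one obtains $\sum_\mu|\beta_\mu|^2=(UU^\dag)_{11}\leq 1$, which is precisely the desired bound.

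The only subtlety to handle carefully is the direction of the isometry $U$ and the observation that its row norms are bounded by one; once the standard Kraus uniqueness theorem is invoked, no genuine obstacle remains, and the full statement of Theorem~\ref{thm:cp} will follow by a straightforward rescaling $B\mapsto B/\sqrt{\gamma}$ when $\gamma>0$.
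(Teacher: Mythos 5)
Your proof is correct and rests on the same key tool as the paper's, namely the unitary/isometry freedom relating different Kraus representations of one completely positive map: for (i)$\Rightarrow$(ii) the paper pads the two families $\{A_\mu\}$ and $\{B\}\cup\{A'_\nu\}$ with zeros and relates them by a square unitary, whereas you invoke the isometry version against a minimal representation and bound the row norm via the projection $UU^\dag$ --- the same argument in slightly different clothing, and your use of a minimal representation makes the bound $r\leq d_\e^2$ explicit. For (ii)$\Rightarrow$(i) the paper checks positive semidefiniteness of the coefficient matrix $\delta_{\mu\nu}-\beta_\mu\beta_\nu^*$, while you complete $(\beta_1,\dots,\beta_{r+1})$ to a unitary and exhibit an explicit Kraus family containing $B$ as its first element; both are valid and essentially equivalent.
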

	\begin{proof}
		(i)$\implies$(ii) Let $\phi-B(\cdot)B^\dag$ be completely positive: since $B(\cdot)B^\dag$ is completely positive by construction, $\phi$ is completely positive as well. Therefore, there exist two families $\{A_\mu\}_{\mu=1,\dots,r}$ and $\{A'_\nu\}_{\mu=1,\dots,s}$, with $r,s\leq d_\e^2$, such that
		\begin{eqnarray}
			\phi&=&\sum_{\mu=1}^r A_\mu(\cdot)A_\mu^\dag;\\
			\phi-B(\cdot)B^\dag&=&\sum_{\nu=1}^{s} A'_\nu(\cdot)A_\nu^{'\dag};
		\end{eqnarray}
		which implies
		\begin{equation}
			\sum_{\mu=1}^r A_\mu(\cdot)A_\mu^\dag=\sum_{\nu=1}^{s} A'_\nu(\cdot)A_\nu^{'\dag}+B(\cdot)B^\dag.
		\end{equation}
		Therefore, $\{A_\mu\}_{\mu=1,\dots,r}$ and $\{B\}\cup\{A'_\nu\}_{\nu=1,\dots,r'}$ are two families of Kraus operators representing the same map $\phi$. Consequently, by appending zeros to the smaller set of operators, there exists a unitary matrix $\left(u_{\mu\nu}\right)_{\mu,\nu=1,\dots,m}$, with $m=\max\{r,s+1\}$, which transforms one set into the other. In particular, in all cases
		\begin{equation}
			B=\sum_{\mu=1}^r u_{1\mu}A_\mu,
		\end{equation}
		since either $r=m$ or the remaining terms in the sum above are zero. Since the rows of an unitary matrix have unit norm, in the first case $\sum_\mu|u_{1\mu}|^2=1$, while in the second case $\sum_\mu|u_{1\mu}|^2\leq1$.
		
		(ii)$\implies$(i) Let $\phi$ and $B$ as in Eqs.~\eqref{eq:kraus}--\eqref{eq:ti}. Then
		\begin{equation}
			\phi-B(\cdot)B^\dag=\sum_{\mu,\nu=1}^r\left[\delta_{\mu\nu}-\beta_\mu\beta_\nu^*\right]A_\mu(\cdot)A_\nu^\dag,
		\end{equation}
		implying that $\phi-B(\cdot)B$ is completely positive if and only if the matrix $\left(\delta_{\mu\nu}-\beta_\mu\beta_\nu\right)$ is positive semidefinite, which happens if and only if $\sum_\mu|\beta_\mu|^2\leq1$.	
	\end{proof}
	
	\begin{proof}[Proof of Theorem~\ref{thm:cp}]
		The case $\gamma=0$ is obvious, so take $\gamma>0$. Let $\{\ket{\e_j}\}_{j=1,\dots,d_\e}\subset\hilbe$, $\{\ket{\g_a}\}_{a=1,\dots,d_\g}\subset\hilbg$ two orthonormal bases, and define the (unnormalized) maximally entangled vector
		\begin{equation}
			\ket{\Psi}=\sum_{j=1}^{d_\e}\ket{\e_j,\e_j}+\sum_{a=1}^{d_\g}\ket{\g_a,\g_a}\in\hilb\otimes\hilb,
		\end{equation}
		where we use the shorthand $\ket{u,v}\equiv\ket{u}\otimes\ket{v}$. Choi's theorem on completely positive maps~\cite{Choi-75} ensures that $\mathsf{\Phi}$ is completely positive if and only if the operator
		\begin{equation}\label{eq:choi}
			\mathsf{C}_{\mathsf{\Phi}}=\left(\mathsf{\Phi}\otimes\mathsf{id}\right)(\ketbra{\Psi}{\Psi})\in\bh\otimes\bh\simeq\mathcal{B}(\hilb\otimes\hilb)
		\end{equation}
		is positive semidefinite; similarly, defining
		\begin{equation}
			\ket{\Psi_\e}=\sum_{j=1}^{d_\e}\ket{\e_j,\e_j}\in\hilb_\e\otimes\hilb_\e,\qquad\ket{\Psi_\g}=\sum_{a=1}^{d_\g}\ket{\g_a,\g_a}\in\hilb_\g\otimes\hilb_\g,
		\end{equation}
		then $\phi$ and $\omega$ are completely positive if and only if the operators
		\begin{eqnarray}
			C_{\phi}&=&\left(\phi\otimes\mathrm{id}\right)\left(\ketbra{\Psi_\e}{\Psi_\e}\right)\in\bhe\otimes\bhe\simeq\mathcal{B}(\hilbe\otimes\hilbe);\\
			C_{\omega}&=&\left(\omega\otimes\mathrm{id}\right)\left(\ketbra{\Psi_\e}{\Psi_\e}\right)\in\bhg\otimes\bhe\simeq\mathcal{B}(\hilbg\otimes\hilbe),
		\end{eqnarray}
		are, respectively, positive semidefinite. Now, by Eq.~\eqref{eq:choi} and the definition of $\mathsf{\Phi}$, a simple calculation yields
		\begin{equation}
			\mathsf{C}_{\mathsf{\Phi}}=\left[
			\begin{array}{c|c|c|c}
				C_\phi&\cdot&\cdot&(B\otimes\oper_\e)\ketbra{\Psi_\e}{\Psi_\g}\\\hline
				\cdot&\cdot&\cdot&\cdot\\\hline
				\cdot&\cdot&C_\omega&\cdot\\\hline
				\ketbra{\Psi_\g}{\Psi_\e}(B\otimes\oper_\e)^\dag&\cdot&\cdot&\gamma\ketbra{\Psi_\g}{\Psi_\g}
			\end{array}
			\right],
		\end{equation}
		where the partition
		\begin{eqnarray}
			\mathcal{B}(\hilb\otimes\hilb)\simeq\bigoplus_{r,r',s,s'=\e,\g}\mathcal{B}(\hilb_s\otimes\hilb_{s'},\hilb_r\otimes\hilb_{r'})
		\end{eqnarray}
		has been employed.
		
		The positive semidefiniteness of $\mathsf{C}_{\sf \Phi}$ can be characterized by means of the Schur complement~\cite{Bhatia,Matrices-2}. For $\gamma>0$ the operator $\gamma\ketbra{\Psi_\g}{\Psi_\g}$ admits a positive semidefinite generalized inverse $(1/\gamma d_\g^2)\ketbra{\Psi_\g}{\Psi_\g}$; a direct application of the Schur complement then shows that $\mathsf{C}_{\mathsf{\Phi}}$ is positive semidefinite if and only if
		\begin{equation}
			\left[
			\begin{array}{c|c|c}
				C_\phi-\frac{1}{\gamma}\left(B\otimes\oper_\e\right)\ketbra{\Psi_\e}{\Psi_\e}\left(B\otimes\oper_\e\right)^\dag&\cdot&\cdot\\\hline
				\cdot&\cdot&\cdot\\\hline
				\cdot&\cdot&C_\omega
			\end{array}
			\right]\succeq0,
		\end{equation}
		which clearly happens if and only if
		\begin{equation}\label{eq:chois}
			C_\phi-\frac{1}{\gamma}\left(B\otimes\oper_\e\right)\ketbra{\Psi_\e}{\Psi_\e}\left(B\otimes\oper_\e\right)^\dag\succeq0\quad\text{and}\quad C_\omega\succeq0.
		\end{equation}
		The latter condition is equivalent to the complete positivity of $\omega$, while the former condition is equivalent to the complete positivity of $\phi-\frac{1}{\gamma}B(\cdot)B^\dag$, since the first term in Eq.~\eqref{eq:chois} is indeed the Choi state of said map.
	\end{proof}
	
	\begin{remark}
		We can also find an explicit Kraus representation for $\mathsf{\Phi}$ (which, incidentally, provides an alternative proof of the implication (iii)$\implies$(i) of Theorem~\ref{thm:cp}). Let $\phi:\bhe\rightarrow\bhe$ and $\omega:\bhe\rightarrow\bhg$ completely positive; then they admit Kraus representations
		\begin{eqnarray}
			\phi&=&\sum_{\mu=1}^rA_\mu(\cdot)A_\mu^\dag,\qquad\{A_\mu\}_{\mu=1,\dots,r}\subset\bhe;\\
			\omega&=&\sum_{\nu=1}^{s}Q_\nu(\cdot)Q_\nu^\dag,\qquad\{Q_\nu\}_{\nu=1,\dots,s}\subset\mathcal{B}\left(\hilbe,\hilbg\right).
		\end{eqnarray}
		Take $B=\sum_{\mu=1}^r\beta_\mu A_\mu$ for some $\{\beta_\mu\}_{\mu=1,\dots,r}$ with $\sum_{\mu=1}^r|\beta_\mu|^2\leq\gamma$. We define a family of operators $\{\mathsf{A}_\mu\}_{\mu=0,1,\dots,r+s}$ by
		\begin{eqnarray}
			\mathsf{A}_0&=&\blockmatrix{0}{0}{0}{(\gamma-\sum_\mu|\beta_\mu|^2)^{1/2}\,\oper_\g};\\
			\mathsf{A}_\mu&=&\blockmatrix{A_\mu}{0}{0}{\beta_\mu^*\oper_\g},\;\mu=1,\dots,r;\\
			\mathsf{A}_{r+\nu}&=&\blockmatrix{0}{0}{Q_\nu}{0},\;\nu=1,\dots,s;
		\end{eqnarray}
		then an immediate computation shows that
		\begin{equation}
			\mathsf{\Phi}=\sum_{\mu=0}^{r+s}\mathsf{A}_\mu(\cdot)\mathsf{A}_\mu^\dag.
		\end{equation}
	\end{remark}
	
	We conclude this subsection by stating Theorem~\ref{thm:cp} for the subclass of manifestly trace preserving maps given in Example~\ref{ex:tp}, cf.~Eq.~\eqref{eq:phi2}.
	\begin{corollary}\label{coroll:cp}
		Let $\phi:\bhe\rightarrow\bhe$ and $B\in\bhe$; let $\mathsf{\Phi}$ as in Example~\ref{ex:tp}. The following statements are equivalent:
		\begin{itemize}
			\item[(i)] $\mathsf{\Phi}$ is completely positive and trace preserving;
			\item[(ii)] the map $\phi-B(\cdot)B^\dag$ is completely positive, and $\phi$ is trace non-increasing;
			\item[(iii)] there exist $\{A_\mu\}_{\mu=1,\dots,r}\subset\bhe$ and $\{\beta_\mu\}_{\mu=1,\dots,r}\subset\mathbb{C}$, with $r\leq d_\e^2$, such that
			\begin{equation}
				\phi=\sum_{\mu=1}^rA_\mu(\cdot)A_\mu^\dag,\qquad\sum_{\mu=1}^rA_\mu^\dag A_\mu\preceq\oper_\e
			\end{equation}
			and
			\begin{equation}\label{eq:b}
				B=\sum_{\mu=1}^r\beta_\mu A_\mu,\quad\text{with}\quad\sum_{\mu=1}^r|\beta_\mu|^2\leq1.
			\end{equation}
		\end{itemize}
	\end{corollary}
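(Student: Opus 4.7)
The plan is to deduce this statement from Theorem~\ref{thm:cp} and Lemma~\ref{lemma:ball} by specializing to the setting of Example~\ref{ex:tp}, where $\gamma=1$ and $\omega(X_\ee)=\tr[X_\ee-\phi(X_\ee)]\,\Omega$ for a fixed state $\Omega\in\bhg$. Since $\mathsf{\Phi}$ is trace preserving by construction regardless of $\phi$ and $B$, the task reduces entirely to translating the complete positivity conditions of Theorem~\ref{thm:cp}.

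For (i)$\Leftrightarrow$(ii) I would apply Theorem~\ref{thm:cp} with $\gamma=1$. This immediately yields the requirement that $\phi-B(\cdot)B^\dag$ be completely positive; the remaining condition, complete positivity of $\omega$, needs to be shown equivalent to $\phi$ being trace non-increasing. Factoring $\omega=\tau(\cdot)\,\Omega$ with the scalar functional $\tau(X)=\tr[X-\phi(X)]$, a direct computation on elementary tensors gives
\begin{equation}
(\omega\otimes\mathrm{id})(\rho)=\Omega\otimes(\tau\otimes\mathrm{id})(\rho),
\end{equation}
so, using that $\Omega\succeq0$ and $\Omega\neq0$, complete positivity of $\omega$ is equivalent to complete positivity of $\tau$. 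A positive linear functional on a matrix algebra is automatically completely positive---it has the form $\tr(T\,\cdot\,)$ with $T\succeq0$---so this further reduces to positivity of $\tau$, namely $\tr\phi(X_\ee)\leq\tr X_\ee$ for all $X_\ee\succeq0$, i.e., $\phi$ trace non-increasing.

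For (ii)$\Leftrightarrow$(iii) I would invoke Lemma~\ref{lemma:ball} with $\gamma=1$: complete positivity of $\phi-B(\cdot)B^\dag$ is equivalent to the existence of a Kraus family $\{A_\mu\}_{\mu=1,\dots,r}$ for $\phi$ and scalars $\{\beta_\mu\}$ with $B=\sum_\mu\beta_\mu A_\mu$ and $\sum_\mu|\beta_\mu|^2\leq1$. The trace non-increasing condition then becomes $\sum_\mu A_\mu^\dag A_\mu\preceq\oper_\e$ by the remark following Theorem~\ref{thm:cp}; crucially, $\sum_\mu A_\mu^\dag A_\mu$ equals the dual map $\phi^*$ evaluated at $\oper_\e$ and is hence the same for every Kraus representation, so both conditions can be imposed simultaneously on the family produced by Lemma~\ref{lemma:ball}. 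The only genuinely non-mechanical step is the CP-of-$\omega$ argument above; everything else is a direct reading off of the earlier results.
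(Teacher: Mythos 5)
Your proposal is correct and follows essentially the same route as the paper: trace preservation holds by construction, Theorem~\ref{thm:cp} (with $\gamma=1$) reduces complete positivity of $\mathsf{\Phi}$ to complete positivity of $\phi-B(\cdot)B^\dag$ together with complete positivity of $\omega$, and the latter is identified with positivity of the functional $X_\ee\mapsto\tr[X_\ee-\phi(X_\ee)]$, i.e.\ the trace non-increasing property. Your additional remarks---that $(\omega\otimes\mathrm{id})$ factors through $\Omega\otimes(\tau\otimes\mathrm{id})$, that positive functionals on matrix algebras are automatically completely positive, and that $\sum_\mu A_\mu^\dag A_\mu=\phi^*(\oper_\e)$ is independent of the Kraus representation so the two conditions in (iii) can be imposed on the same family---are exactly the details the paper leaves implicit, and they are all sound.
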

	
	\begin{proof}
		The map $\mathsf{\Phi}$ is trace preserving by construction. Besides, it readily follows from Theorem~\ref{thm:cp} with
		\begin{equation}
			\omega(X_\ee)=\tr\left[X_\ee-\phi(X_\ee)\right]\Omega
		\end{equation}
		that the latter map is completely positive if and only if the form
		\begin{equation}
			X_\ee\in\bhe\mapsto \tr\left[X_\ee-\phi(X_\ee)\right]\in\mathbb{C}
		\end{equation}
		is positive; but this condition is clearly equivalent to $\phi$ being a trace non-increasing map.
	\end{proof}
	
	\subsection{Case \texorpdfstring{$d_\g=1$}{dg=1}}\label{subsec:dg1}	
	We have provided a complete characterization of all completely positive excitation-damping maps for arbitrary values of the dimensions $d_\e$, $d_\g$ of both sectors of the Hilbert space $\hilb$. As previously remarked, an analogous characterization of positive maps in the general case is generally much more difficult.
	
	We will now particularize our discussion to the ``minimal" case $d_\g=1$, that is, the case of a one-dimensional ground sector to $\hilbe$. Choosing any $\ket{\g}\in\hilbg$ with unit norm, the most general $\X\in\bh$ can be thus written as
	\begin{equation}
		\X=\blockmatrix{X_\ee}{\ketbra{\xi_\e}{\g}}{\ketbra{\g}{\xi_\e}}{x_\gg\ketbra{\g}{\g}}\simeq\blockmatrix{X_\ee}{\ket{\xi_\e}}{\bra{\xi_\e}}{x_\gg}
	\end{equation}
	for some $\ket{\xi_\e}\in\hilbe$ and $x_\gg\in\mathbb{C}$, where we have applied the obvious isomorphism $\hilbg\simeq\mathbb{C}$ which will be let understood hereafter. With this representation, given $\phi:\bhe\rightarrow\bhe$ and $B\in\bhe$, the map $\mathsf{\Phi}$ in Eq.~\eqref{eq:phi0} acts as
	\begin{equation}\label{eq:phi3}
		\mathsf{\Phi}(\X)=\blockmatrix{\phi(X_\ee)}{B\ket{\xi_\e}}{\bra{\xi_\e}B^\dag}{\gamma x_\gg+\omega(X_\ee)},
	\end{equation}
	where now $\omega:\hilbe\rightarrow\mathbb{C}$ is a linear functional. By Theorem~\ref{thm:cp}, this map is completely positive if and only if $\omega$ is a positive functional and $\phi-\gamma^{-1}B(\cdot)B^\dag$ is completely positive, the latter condition being, in turn, equivalent to $B$ belonging to the ball $\mathcal{B}_{\phi,\gamma}$ associated with $\phi$ as defined in Eq.~\eqref{eq:bphi}. As a particular feature of the case $d_\g=1$, we are indeed able to characterize all choices of $\phi$ and $B$ rendering $\mathsf{\Phi}$ positive:
	
	\begin{proposition}\label{prop:p}
		Let $\phi:\bhe\rightarrow\bhe$, $\omega:\bhe\rightarrow\mathbb{C}$, $B\in\bhe$, and $\gamma\geq0$; let $\mathsf{\Phi}$ as in Eq.~\eqref{eq:phi3}. The following statements are equivalent:
		\begin{itemize}
			\item[(i)] $\mathsf{\Phi}$ is positive;
			\item[(ii)] the map $\omega$ is positive, and one of the following conditions hold:
			\begin{itemize}
				\item[$\bullet$] $\gamma=0$ and $B=0$;
				\item[$\bullet$] $\gamma>0$, and the map $\phi-\frac{1}{\gamma}B(\cdot)B^\dag$ is positive.
			\end{itemize}
		\end{itemize}
	\end{proposition}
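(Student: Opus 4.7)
The plan is to reduce positivity of $\mathsf{\Phi}$ to its action on rank-one positive operators. Since every $\X\succeq0$ in $\bh$ is a sum of rank-one projectors and positivity of a linear map is preserved under sums, $\mathsf{\Phi}$ is positive if and only if $\mathsf{\Phi}(\ket{v}\bra{v})\succeq0$ for every $\ket{v}\in\hilb$. In the $d_\g=1$ setting, any such vector decomposes as $\ket{v}=\ket{\psi_\e}+\alpha\ket{\g}$ with $\ket{\psi_\e}\in\hilbe$ and $\alpha\in\mathbb{C}$, so that $\mathsf{\Phi}(\ket{v}\bra{v})$ is a block matrix whose diagonal entries are $\phi(\ket{\psi_\e}\bra{\psi_\e})$ and the scalar $c=\gamma|\alpha|^2+\omega(\ket{\psi_\e}\bra{\psi_\e})$, with off-diagonal block $\bar{\alpha}B\ket{\psi_\e}$. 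Because the ground-sector block is a scalar, the Schur complement reduces the test $\mathsf{\Phi}(\ket{v}\bra{v})\succeq0$ to $c\geq0$ together with the single operator inequality $c\,\phi(\ket{\psi_\e}\bra{\psi_\e})\succeq|\alpha|^2\,B\ket{\psi_\e}\bra{\psi_\e}B^\dag$.

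For the implication (ii)$\Rightarrow$(i), I would verify this Schur inequality directly from the hypotheses. Positivity of $\omega$ combined with $\gamma\geq0$ gives $c\geq0$; for $\gamma>0$, the positivity of $\phi-\gamma^{-1}B(\cdot)B^\dag$ evaluated on $\ket{\psi_\e}\bra{\psi_\e}$ is precisely the bound $\gamma\phi(\ket{\psi_\e}\bra{\psi_\e})\succeq B\ket{\psi_\e}\bra{\psi_\e}B^\dag$, and multiplying by $|\alpha|^2$ and then adding the manifestly nonnegative term $\omega(\ket{\psi_\e}\bra{\psi_\e})\phi(\ket{\psi_\e}\bra{\psi_\e})$ yields the full inequality. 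The degenerate case $\gamma=0$, $B=0$ is immediate.

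For the converse (i)$\Rightarrow$(ii), I would read off each conclusion by a judicious choice of inputs. Setting $\alpha=0$ forces $\omega(\ket{\psi_\e}\bra{\psi_\e})\geq0$ for every $\ket{\psi_\e}\in\hilbe$, whence positivity of $\omega$ follows by linearity. For $\gamma>0$, dividing the Schur bound by $|\alpha|^2$ (for $\alpha\neq0$) and sending $|\alpha|\to\infty$ annihilates the $\omega$ contribution and yields $\gamma\phi(\ket{\psi_\e}\bra{\psi_\e})\succeq B\ket{\psi_\e}\bra{\psi_\e}B^\dag$; extending by linearity over sums of rank-one operators then proves positivity of $\phi-\gamma^{-1}B(\cdot)B^\dag$. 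For $\gamma=0$, the same $|\alpha|\to\infty$ limit forces $B\ket{\psi_\e}\bra{\psi_\e}B^\dag=0$, hence $B\ket{\psi_\e}=0$ for every $\ket{\psi_\e}$, i.e.~$B=0$.

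The essential step---really, the point at which the argument genuinely uses $d_\g=1$---is that the Schur complement collapses to a scalar-denominator inequality whose $|\alpha|\to\infty$ limit cleanly isolates the contribution of $B$. For $d_\g>1$ the ground-sector block would itself be an operator with nontrivial kernel (handled in the proof of Theorem~\ref{thm:cp} via a generalized inverse), and the operator inequality one would extract from rank-one inputs only constrains $\phi-\gamma^{-1}B(\cdot)B^\dag$ on rank-one arguments in a way that does not automatically propagate to full positivity without the stronger complete-positivity structure. This is the conceptual obstacle, although technically the proof consists only of computing one block matrix, applying the Schur complement twice, and taking a scalar limit.
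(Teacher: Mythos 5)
Your proposal follows essentially the same route as the paper's proof: reduce positivity to rank-one inputs by convex linearity, compute the block form of $\mathsf{\Phi}(\ketbra{v}{v})$, apply the (scalar) Schur complement, and take the $|\alpha|\to\infty$ limit to isolate the constraint on $B$; the paper even phrases the Schur condition with the denominator $\gamma|c|^2+\omega(\ketbra{\xi}{\xi})$ exactly as you do after clearing it. One small point: your stated reduction of $\mathsf{\Phi}(\ketbra{v}{v})\succeq0$ to ``$c\geq0$ together with $c\,\phi(\ketbra{\psi_\e}{\psi_\e})\succeq|\alpha|^2 B\ketbra{\psi_\e}{\psi_\e}B^\dag$'' omits the condition $\phi(\ketbra{\psi_\e}{\psi_\e})\succeq0$, which is not implied by the other two when $c=0$; in the direction (ii)$\Rightarrow$(i) you therefore also need to observe, as the paper does explicitly, that positivity of $\phi$ itself follows from writing $\phi=\bigl(\phi-\gamma^{-1}B(\cdot)B^\dag\bigr)+\gamma^{-1}B(\cdot)B^\dag$. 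With that one line added, your argument (including the explicit treatment of $\gamma=0$, which the paper dismisses as obvious) is complete and correct.
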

	\begin{proof}
		The case $\gamma=0$ is again obvious, so fix $\gamma>0$. First of all notice that, by convex linearity, $\mathsf{\Phi}$ is positive if and only if
		\begin{equation}\label{eq:schur}
			\forall\ket
			{\xi}\in\hilbe,\,\forall c\in\mathbb{C}:\quad \mathsf{\Phi}\blockmatrix{\ketbra{\xi}{\xi}}{c\ket{\xi}}{\bra{\xi}c^*}{|c|^2}=
			\blockmatrix{\phi(\ketbra{\xi}{\xi})}{cB\ket{\xi}}{\bra{\xi}c^*B^\dag}{\gamma|c|^2+\omega(\ketbra{\xi}{\xi})}
			\succeq0.
		\end{equation}
		(i)$\implies$(ii): by Eq.~\eqref{eq:schur} for $c=0$, necessarily $\phi$ and $\omega$ must both be positive maps. Besides, for $c\neq0$, Eq.~\eqref{eq:schur} is equivalent to
		\begin{equation}\label{eq:ineq1}
			\forall\ket
			{\xi}\in\hilbe,\,\forall 0\neq c\in\mathbb{C}:\quad\phi(\ketbra{\xi}{\xi})-\frac{|c|^2}{\gamma|c|^2+\omega(\ketbra{\xi}{\xi})}B\ketbra{\xi}{\xi}B^\dag\succeq0,
		\end{equation}
		the denominator being nonzero since $\gamma>0$ and $\omega$ is a positive map. This must hold for all $c$; in particular, taking the limit $|c|\to\infty$ and recalling that the strong limit of positive semidefinite operators is positive semidefinite, we end up with the condition
		\begin{equation}\label{eq:ineq2}
			\forall\ket
			{\xi}\in\hilbe:\quad\phi(\ketbra{\xi}{\xi})-\frac{1}{\gamma}B\ketbra{\xi}{\xi}B^\dag\succeq0,
		\end{equation}
		which, again by convex linearity, is equivalent to the positivity of $\phi-\frac{1}{\gamma}B(\cdot)B^\dag$.
		
		(ii)$\implies$(i): by assumption, Eq.~\eqref{eq:ineq2} holds, which clearly implies Eq.~\eqref{eq:ineq1}, and thus Eq.~\eqref{eq:schur}, for all $c\neq0$; finally, Eq.~\eqref{eq:schur} for $c=0$ follows from the positivity of $\phi$ (which is an immediate consequence of the positivity of $\phi-\gamma^{-1}B(\cdot)B^\dag$) and $\omega$.
	\end{proof}
	
	Proposition~\ref{prop:p}, together with Theorem~\ref{thm:cp}, allows us to conclude that, as long as we choose beforehand $B$ as a combination of the Kraus operators associated with $\phi$, then complete positivity and positivity are equivalent properties. Such a property, which was already observed for the (multilevel) amplitude-damping and phase-damping channels~\cite{Davide-1,Davide-2}, is therefore a much more general feature.
	\begin{corollary}\label{coroll:cp=p}
		Let $\phi:\bhe\rightarrow\bhe$ with Kraus representation $\phi=\sum_{\mu=1}^rA_\mu(\cdot)A_\mu^\dag$, $\omega:\bhe\rightarrow\bhg$, $\gamma\geq0$, and $B\in\mathrm{Span}\,\{A_1,\dots,A_r\}$; let $\mathsf{\Phi}$ as in Eq.~\eqref{eq:phi3}. The following statements are equivalent:
		\begin{itemize}
			\item[(i)] $\mathsf{\Phi}$ is completely positive;
			\item[(ii)] $\mathsf{\Phi}$ is positive.
		\end{itemize}
	\end{corollary}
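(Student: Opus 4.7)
My plan is to combine Proposition~\ref{prop:p} with Theorem~\ref{thm:cp}. The implication (i)$\implies$(ii) is trivial. For the converse, Proposition~\ref{prop:p} identifies positivity of $\mathsf{\Phi}$ with positivity of $\omega$ together with one of the alternatives $\gamma=0$ and $B=0$, or $\gamma>0$ and $\phi-\gamma^{-1}B(\cdot)B^\dag$ positive; Theorem~\ref{thm:cp} gives the analogous characterization of complete positivity with ``positive'' replaced by ``completely positive''. It therefore suffices to prove two matching statements separately: (a) every positive $\omega:\bhe\to\mathbb{C}$ is automatically completely positive; and (b) under the hypothesis $B\in\mathrm{Span}\,\{A_1,\dots,A_r\}$, positivity of $\psi:=\phi-\gamma^{-1}B(\cdot)B^\dag$ implies its complete positivity.

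Item (a) is standard: since $d_\g=1$, any positive linear functional on $\bhe$ takes the form $X\mapsto\tr(XR)$ with $R\succeq0$, which is trivially CP as a map into $\mathbb{C}$. For (b), expanding $B=\sum_\mu\beta_\mu A_\mu$ puts $\psi$ in the form $\psi=\sum_{\mu,\nu}c_{\mu\nu}A_\mu(\cdot)A_\nu^\dag$ with Hermitian kernel $c_{\mu\nu}=\delta_{\mu\nu}-\gamma^{-1}\beta_\mu\bar\beta_\nu$; by Lemma~\ref{lemma:ball}, complete positivity of $\psi$ reduces to the scalar inequality $\sum_\mu|\beta_\mu|^2\leq\gamma$. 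I would extract exactly this inequality from the positivity condition $\bra{\eta}\psi(\ket{v}\bra{v})\ket{\eta}\geq0$: writing $w_\mu:=\bra{\eta}A_\mu\ket{v}$, positivity reads $\|w\|^2\geq\gamma^{-1}\bigl|\sum_\mu\beta_\mu w_\mu\bigr|^2$, and selecting $\ket{v},\ket{\eta}$ so that the $r$-tuple $w$ points along $(\bar\beta_\mu)_\mu$ reduces this, by direct substitution, to $\sum_\mu|\beta_\mu|^2\leq\gamma$.

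The degenerate case $\gamma=0$ is immediate, since Proposition~\ref{prop:p} then already forces $B=0$ and reduces $\mathsf{\Phi}$ to the block-diagonal map $\phi\oplus\omega$, whose complete positivity follows at once from item (a) and the CP Kraus representation of $\phi$ assumed at the outset. The main technical obstacle is the alignment step in (b): one must verify that the range of the bilinear map $(\ket{v},\ket{\eta})\mapsto(\bra{\eta}A_\mu\ket{v})_\mu\in\mathbb{C}^r$ actually reaches the direction $(\bar\beta_\mu)_\mu$, as otherwise the positivity inequality would supply only a weaker, norm-type bound insufficient for complete positivity.
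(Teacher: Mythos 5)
Your reduction to items (a) and (b) via Proposition~\ref{prop:p} and Theorem~\ref{thm:cp} mirrors the paper's strategy (the paper argues the same dichotomy by contraposition), and item (a) is fine. But the ``alignment step'' you flag in (b) is not a technicality to be checked later: it is a genuine gap, and in fact it cannot be closed. The set $S=\{(\bra{\eta}A_\mu\ket{v})_\mu : \ket{v},\ket{\eta}\in\hilbe\}$ is the image of the \emph{rank-one} operators under $T\mapsto(\tr(A_\mu T))_\mu$, and for $r>1$ it is in general a proper subvariety of $\mathbb{C}^r$; when the direction $(\bar\beta_\mu)_\mu$ lies outside it, positivity only yields $\sup_{0\neq w\in S}\bigl|\sum_\mu\beta_\mu w_\mu\bigr|^2/\|w\|^2\leq\gamma$, which is strictly weaker than $\sum_\mu|\beta_\mu|^2\leq\gamma$. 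A concrete instance: take $d_\e=2$, $\gamma=1$, the maximal-rank Kraus family $A_{(ij)}=\ketbra{\e_i}{\e_j}$ (so that $\phi=\tr(\cdot)\,\oper_\e$) and $B=\beta\oper_\e$, whose (unique, and minimal over all Kraus representations) coefficient vector has squared norm $2|\beta|^2$. Here $w=(\bar\eta_i v_j)_{ij}$ ranges only over product vectors of $\mathbb{C}^2\otimes\mathbb{C}^2$, while the target direction points along the maximally entangled $\ket{\Psi_\e}$; one checks directly that $\phi-B(\cdot)B^\dag=\tr(\cdot)\,\oper_\e-|\beta|^2(\cdot)$ is positive for all $|\beta|\leq1$ (it is a nonnegative combination of the reduction map $X\mapsto\tr(X)\oper_\e-X$ and of $\tr(\cdot)\oper_\e$), yet its Choi matrix $\oper_\e\otimes\oper_\e-|\beta|^2\ketbra{\Psi_\e}{\Psi_\e}$ has the eigenvalue $1-2|\beta|^2<0$ once $|\beta|^2>1/2$.

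Consequently the implication you are trying to establish in (b) is false, and---granting Proposition~\ref{prop:p} and Theorem~\ref{thm:cp}, whose proofs are sound---so is the corollary as stated: with this $\phi$ and $B$, any positive $\omega$, and $1/2<|\beta|^2\leq1$, the map $\mathsf{\Phi}$ of Eq.~\eqref{eq:phi3} is positive but not completely positive, even though $\phi$ has maximal Kraus rank. The paper's own proof buries exactly the same issue in the unproved assertion that $\sum_{\mu,\nu}\left[\delta_{\mu\nu}-\beta_\mu\beta_\nu^*/\gamma\right]A_\mu(\cdot)A_\nu^\dag$ ``is positive if and only if $\sum_\mu|\beta_\mu|^2\leq\gamma$''; that equivalence does hold when $r=1$ (the amplitude-damping case of~\cite{Davide-1}) or when $d_\e=1$, where your alignment is trivially achievable, but not in general. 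So your instinct about where the difficulty sits was exactly right; the honest conclusion is that the step fails, not that it awaits a cleverer choice of $\ket{v}$ and $\ket{\eta}$.
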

	\begin{proof}
		Clearly (i)$\implies$(ii). To prove the converse implication, suppose that $\mathsf{\Phi}$ is not completely positive; by Theorem~\ref{thm:cp}, this happens if and only if either $\omega$ is not a positive functional or, expressing $B$ as $\sum_{\mu=1}^r\beta_\mu A_\mu$ for some $\beta_1,\dots,\beta_r\in\mathbb{C}$, we have $\sum_{\mu=1}^r|\beta_\mu|^2>\gamma$. In the first case, there is $0\preceq X_\ee\in\bhe$ such that $\omega(X_\ee)<0$, and then
		\begin{equation}
			\mathsf{\Phi}\blockmatrix{X_\ee}{0}{0}{0}=\blockmatrix{\phi(X_\ee)}{0}{0}{\omega(X_\ee)},
		\end{equation}
		so that $\mathsf{\Phi}$ transforms a positive semidefinite operator into an operator with a negative eigenvalue; therefore, $\mathsf{\Phi}$ is positive. In the second case, we have
		\begin{equation}
			\phi-\frac{1}{\gamma}B(\cdot)B^\dag=\sum_{\mu,\nu=1}^r\left[\delta_{\mu\nu}-\frac{\beta_\mu\beta_\nu^*}{\gamma}\right]A_\mu(\cdot)A_\nu^\dag
		\end{equation}
		and it is easy to show that the map above is positive if and only if $\sum_{\mu=1}^r|\beta_\mu|^2\leq\gamma$, which is false by assumption.
	\end{proof}
	In particular, the equivalence between positivity and complete positivity \textit{always} holds whenever $\phi$ has maximal Kraus rank $r=d_\e^2$, since in such a case $\mathrm{Span}\{A_1,\dots,A_r\}=\bhe$.
	
	Finally, we point out that the equivalence between positivity and complete positivity fails in the case $d_\g>1$, as can be immediately shown by taking into account the trivial case $\phi=B=0$: any positive $\omega$ which is not completely positive will render $\mathsf{\Phi}$ positive but, likewise, not completely positive. This counterexample does not apply if $d_\g=1$ since, in such a case, positivity and complete positivity coincide.
	
	\subsection{Case \texorpdfstring{$d_\e=d_\g=1$}{de=dg=1}}
	
	Let us finally analyze briefly the qubit case, i.e.~$d_\e=d_\g=1$. Choosing any couple $\ket{\e}\in\hilbe$, $\ket{\g}\in\hilbg$ of states with unit norm, the most general $\X\in\hilb$ can be written as
	\begin{equation}
		\X=\blockmatrix{x_\ee\ketbra{\e}{\e}}{x_\eg\ketbra{\e}{\g}}{x_\ge\ketbra{\g}{\e}}{x_\gg\ketbra{\g}{\g}}\simeq\blockmatrix{x_\ee}{x_\eg}{x_\ge}{x_\gg}
	\end{equation}
	for some $x_\ee,x_\eg,x_\ge,x_\gg\in\mathbb{C}$, with the obvious isomorphisms $\hilbe,\hilbg\simeq\mathbb{C}$ being henceforth understood. With this representation, the most general map $\mathsf{\Phi}:\bh\rightarrow\bh$ as in Eq.~\eqref{eq:phi0} can be written as follows:
	\begin{equation}
		\mathsf{\Phi}(\X)=\blockmatrix{|a|^2x_\ee}{bx_\eg}{b^*x_\ge}{\gamma x_\gg+|q|^2x_\ee}
	\end{equation}
	for some $a,b,q\in\mathbb{C}$, $\gamma\geq0$. This family of maps contains some familiar qubit channel, cf.~Eq.~\eqref{eq:qubit}:
	\begin{itemize}
		\item when $\gamma=1$, $|b|=|a|$ and $|q|^2=1-|a|^2$, it reduces to an amplitude-damping channel;
		\item when $\gamma=1$ and $|a|=|q|=1$, it reduces to a phase-damping channel.
	\end{itemize}
	In all such cases, $\gamma=0$, Theorem~\ref{thm:cp} and Corollaries~\ref{coroll:cp}--\ref{coroll:cp=p} imply that complete positivity \textit{and} positivity hold if and only if $|b|\leq\sqrt{\gamma}|a|$; also, the map is CPTP if and only if $\gamma=1$ and $|q|^2=1-|a|^2$, cf.~Example~\ref{ex:tp} in the qubit case.
	
	\section{Time-dependent scenario: semigroups and CP-divisibility} \label{II}	
	We shall now analyze time-dependent excitation-damping channels. Given $\hilbe,\hilbg,\hilb$ as before, we shall consider a time-dependent map $\mathsf{\Phi}_t:\bh\rightarrow\bh$ defined as follows: for all $t\geq0$ and all $\X\in\bh$ partitioned as in Eq.~\eqref{eq:part}, we set
	\begin{equation}\label{eq:phit0}
		\mathsf{\Phi}_t(\X)=\blockmatrix{\phi_t(X_\ee)}{B_t X_\eg}{X_\ge B_t^\dag}{\gamma_t X_\gg+\omega_t(X_\ee)},
	\end{equation}
	where $\phi_t:\bhe\rightarrow\bhe$, $\omega_t:\bhe\rightarrow\bhg$, $B_t\in\bhe$, and $\gamma_t\geq0$. Since we will be ultimately interested in the case of \textit{trace preserving} time-dependent maps, we shall hereafter restrict our attention to the case $\gamma_t=1$ in Eq.~\eqref{eq:phit0}, so that
	\begin{equation}\label{eq:phit}
		\mathsf{\Phi}_t(\X)=\blockmatrix{\phi_t(X_\ee)}{B_t X_\eg}{X_\ge B_t^\dag}{X_\gg+\omega_t(X_\ee)}.
	\end{equation}
	Theorem~\ref{thm:cp} implies that $\mathsf{\Phi}_t$ is completely positive at all $t\geq0$ if and only if $\omega_t$ and $\phi_t-B_t(\cdot)B_t^\dag$ are completely positive at all times; besides, by Proposition~\ref{prop:inv}, $\mathsf{\Phi}_t$ is invertible at all times if and only if $\phi_t,B_t$ are all invertible at all times, with
	\begin{equation}\label{eq:phitinv}
		\mathsf{\Phi}_t^{-1}(\X)=\blockmatrix{\phi_t^{-1}(X_\ee)}{B_t^{-1}X_\eg}{X_\ge B_t^{\dag-1}}{X_\gg-\omega_t\circ\phi_t^{-1}(X_\ee)}.
	\end{equation}
	
	By using Theorem~\ref{thm:cp}, we will be able characterize all excitation-damping semigroups in Subsection~\ref{subsec:semigroup}, and then all invertible CP-divisible excitation-damping channel in Subsection~\ref{subsec:cpdiv}.
	
	\subsection{Excitation-damping Markovian semigroups}\label{subsec:semigroup}	
	Recall that a time-dependent map $\mathsf{\Phi}_t:\bh\rightarrow\bh$ is said to satisfy the semigroup property whenever
	\begin{equation}
		\mathsf{\Phi}_t\circ\mathsf{\Phi}_s=\mathsf{\Phi}_{t+s}\qquad\forall t,s\geq0;
	\end{equation}
	in particular, all completely positive maps satisfying said property, together with $\mathsf{\Phi}_0=\mathsf{id}$, are usually referred to as Markovian semigroups~\cite{ALICKI}. Besides, all trace non-increasing semigroups with $\mathsf{\Phi}_0=\mathsf{id}$ can be characterized via the relation $\mathsf{\Phi}_t=\exp(t\mathsf{L})$, where $\mathsf{L}:\bh\rightarrow\bh$, the Gorini--Kossakowski--Lindblad--Sudarshan (GKLS) generator, reads, for all $\X\in\bh$,
	\begin{equation}
		\mathsf{L}(\X)=\mathsf{E}(\X)-\left[\mathsf{\Gamma X+X\Gamma^\dag}\right],
	\end{equation}
	where $\mathsf{E}:\bh\rightarrow\bh$ is a completely positive map, and $\mathsf{\Gamma}\in\bh$ with $\mathrm{Re}\,\mathsf{\Gamma}\succeq0$. By choosing a Kraus representation $\mathsf{E}=\sum_\mu\mathsf{F}_\mu(\cdot)\mathsf{F}_\mu^\dag$ for the former, and writing $\mathsf{\Gamma}$ without loss of generality as
	\begin{equation}
		\mathsf{\Gamma}=\i\mathsf{H}+\frac{1}{2}\left(\mathsf{G}+\sum_{\mu}\mathsf{F_\mu^\dag F_\mu}\right),\quad \mathsf{H}=\mathsf{H}^\dag,\;\;\mathsf{G}=\mathsf{G}^\dag\succeq0,
	\end{equation}
	so that $\tr\mathsf{L}(\X)=\tr(\mathsf{GX})$, we restore the familiar expression for GKLS generators:
	\begin{equation}
		\mathsf{L}(\X)=-\i[\mathsf{H},\X]-\frac{1}{2}\{\mathsf{G},\X\}+\sum_{\mu}\left[\mathsf{F}_\mu\X\mathsf{F}_\mu^\dag-\frac{1}{2}\{\mathsf{F}^\dag_\mu \mathsf{F}_\mu,\X\}\right],
	\end{equation}
	where the symbols $[\cdot,\cdot]$ and $\{\cdot,\cdot\}$ are respectively used for the commutator and anticommutator. By construction, the corresponding semigroup is trace preserving, and hence CPTP, if and only if $\tr\mathsf{L}(\X)=0$ for all $\X\in\bh$, which happens if and only $\tr\mathsf{GX}=0$, and thus if and only if $\mathsf{G}=\mathsf{0}$; in this sense, $\mathsf{G}$ is the operator responsible for the loss of trace (cf. Remark~\ref{rem:ww}).
	
	The first main result of this section follows: a complete characterization of all completely positive excitation-damping semigroups, and in particular all CPTP excitation-damping semigroups, can be obtained.	
	\begin{theorem}\label{thm:semigroups}
		For all $t\geq0$, let $\phi_t:\bhe\rightarrow\bhe$, $\omega_t:\bhe\rightarrow\bhg$, and $B_t\in\bhe$, with all functions being continuously differentiable; let $\mathsf{\Phi}_t:\bh\rightarrow\bh$ as in Eq.~\eqref{eq:phit}. The following statements are equivalent:
		\begin{itemize}
			\item[(i)] $\mathsf{\Phi}_t$ is a completely positive semigroup;
			\item[(ii)] the following conditions hold:
			\begin{itemize}
				\item[$\bullet$] $\phi_t=\exp(t L)$, where $L:\bhe\rightarrow\bhe$ is a GKLS generator, that is,
				\begin{equation}\label{eq:gkls}
					L=-\i[H,\cdot]-\frac{1}{2}\{G,\cdot\}+\sum_{\mu=1}^r\left(F_\mu(\cdot)F_\mu^\dag-\frac{1}{2}\{F_\mu^\dag F_\mu,\cdot\}\right)
				\end{equation}
				for some $H,G,F_1,\dots,F_r\in\bhe$, with $H=H^\dag$ and $G=G^\dag\succeq0$;
				\item[$\bullet$] $B_t=\exp(tK)$, where
				\begin{equation}\label{eq:kappa}
					K=-\i H-\frac{1}{2}\left(G+\sum_{\mu=1}^rF_\mu^\dag F_\mu\right)-\left(\i\varepsilon+\frac{\kappa}{2}\right)\oper_\e-\sqrt{\kappa}\,\sum_{\mu=1}^rc_\mu F_\mu
				\end{equation}
				for some $\varepsilon\in\mathbb{R}$, $\kappa\geq0$ and some $c_1,\dots,c_r\in\mathbb{C}$ with $\sum_{\mu=1}^r|c_\mu|^2\leq1$;
				\item[$\bullet$] $\omega_t$ is given by
				\begin{equation}\label{eq:omegat}
					\omega_t=\psi\circ\int_0^t\mathrm{d}\tau\,\exp(\tau L)
				\end{equation}
				for some completely positive map $\psi:\bhe\rightarrow\bhg$.
		\end{itemize}\end{itemize}
		Furthermore, $\mathsf{\Phi}_t$ is a completely positive and trace preserving semigroup if and only if, in addition, $\tr L(X_\ee)+\tr\psi(X_\ee)=0$, or equivalently		
		\begin{equation}\label{eq:tr}
			\tr\psi(X_\ee)=\tr GX_\ee ,
		\end{equation}
		for all $X_\ee\in\bhe$.
	\end{theorem}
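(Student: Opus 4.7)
The plan is to reduce the analysis to the three building blocks $\phi_t$, $B_t$, $\omega_t$ via the semigroup law, and then apply Theorem~\ref{thm:cp} pointwise in $t$ to constrain their structure. Writing $\mathsf{\Phi}_t\circ\mathsf{\Phi}_s=\mathsf{\Phi}_{t+s}$ blockwise, the $\ee$-block and the off-diagonal blocks give $\phi_{t+s}=\phi_t\circ\phi_s$ and $B_{t+s}=B_tB_s$, so $\phi_t=\exp(tL)$ and $B_t=\exp(tK)$ with $L:=\dot\phi_0$ and $K:=\dot B_0$. The $\gg$-block produces the inhomogeneous identity $\omega_{t+s}=\omega_s+\omega_t\circ\phi_s$; differentiating in $t$ at the origin gives $\dot\omega_s=\psi\circ\phi_s$ with $\psi:=\dot\omega_0$, and integrating (with $\omega_0=0$) recovers Eq.~\eqref{eq:omegat}.

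Complete positivity of $\mathsf{\Phi}_t$ for all $t\geq0$ then splits, via Theorem~\ref{thm:cp}, into (a) $\omega_t$ CP for all $t$ and (b) $\phi_t-B_t(\cdot)B_t^\dag$ CP for all $t$. Condition (a) is immediate: $\omega_t=\psi\circ\int_0^t\phi_\tau\,\mathrm{d}\tau$ is CP whenever $\psi$ is, since composition and non-negatively weighted integration preserve CP, while the converse follows from $\psi=\lim_{t\to0^+}\omega_t/t$. Independently, $\phi_t=\exp(tL)$ being CP for all $t\geq0$ is the classical characterization of $L$ as a GKLS generator, giving Eq.~\eqref{eq:gkls}.

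The main obstacle is (b). Setting $\Psi_t:=\phi_t-B_t(\cdot)B_t^\dag$, a direct computation gives $\Psi_0=0$ and the inhomogeneous ODE $\dot\Psi_t=L\circ\Psi_t+M\circ[B_t(\cdot)B_t^\dag]$, where $M:=L-K(\cdot)-(\cdot)K^\dag$. The Duhamel formula then reads
\begin{equation}
	\Psi_t=\int_0^t\phi_{t-s}\circ M\circ[B_s(\cdot)B_s^\dag]\,\mathrm{d}s,
\end{equation}
so $\Psi_t$ is CP for all $t$ if and only if $M$ is CP (necessity from $M=\lim_{t\to0^+}\Psi_t/t$, sufficiency from the CP-preserving composition above). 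Substituting the GKLS form of $L$ yields $M(X)=\sum_\mu F_\mu X F_\mu^\dag-\tilde\Gamma X-X\tilde\Gamma^\dag$ with $\tilde\Gamma:=\i H+\frac{1}{2}(G+\sum_\mu F_\mu^\dag F_\mu)+K$. The task is to show that $M$ is CP if and only if $\tilde\Gamma=-(\i\varepsilon+\kappa/2)\oper_\e-\sqrt{\kappa}\sum_\mu c_\mu F_\mu$ for some $\varepsilon\in\mathbb{R}$, $\kappa\geq0$ and complex $c_\mu$ with $\sum_\mu|c_\mu|^2\leq1$; exploiting the ambiguity $\tilde\Gamma\to\tilde\Gamma+\i\alpha\oper_\e$ ($\alpha\in\mathbb{R}$) that leaves $\tilde\Gamma(\cdot)+(\cdot)\tilde\Gamma^\dag$ invariant, this can be extracted by matching $M$ to the generic Kraus ansatz $\sum_\mu(F_\mu+\alpha_\mu\oper_\e)(\cdot)(F_\mu+\alpha_\mu\oper_\e)^\dag+\beta(\cdot)$ with $\alpha_\mu\in\mathbb{C}$, $\beta\geq0$, and identifying coefficients. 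Solving for $K$ produces Eq.~\eqref{eq:kappa}, while the converse is confirmed by the explicit identity
\begin{equation}
	M(X)=\sum_\mu(F_\mu+\sqrt{\kappa}\,\bar c_\mu\oper_\e)X(F_\mu+\sqrt{\kappa}\,\bar c_\mu\oper_\e)^\dag+\kappa\Bigl(1-\textstyle\sum_\mu|c_\mu|^2\Bigr)X,
\end{equation}
which is manifestly CP under $\sum_\mu|c_\mu|^2\leq1$.

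Finally, trace preservation follows by differentiating $\tr\mathsf{\Phi}_t(\X)=\tr\X$ at $t=0$ to obtain $\tr L(X_\ee)+\tr\psi(X_\ee)=0$ for all $X_\ee\in\bhe$; a short cyclic-trace computation on the GKLS form of $L$ reduces $\tr L(X_\ee)$ to $-\tr GX_\ee$, yielding Eq.~\eqref{eq:tr}.
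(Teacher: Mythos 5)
Your overall architecture matches the paper's: you reduce the semigroup law to the three block relations $\phi_t=\exp(tL)$, $B_t=\exp(tK)$, $\omega_t=\psi\circ\int_0^t\exp(\tau L)\,\mathrm{d}\tau$ (this is exactly the paper's Lemma~\ref{lemma:semigroups}), and then apply Theorem~\ref{thm:cp} pointwise so that everything hinges on the complete positivity of $\omega_t$ and of $\Psi_t:=\phi_t-B_t(\cdot)B_t^\dag$. Where you genuinely diverge is the treatment of $\Psi_t$: the paper expands the Kraus operators of $\e^{tL}$ to leading order in $t$, invokes Lemma~\ref{lemma:ball} to write $B_t=\sum_\mu\beta_\mu(t)A_\mu(t)$, and extracts $K$ by differentiating at $t=0$; you instead derive the Duhamel identity $\Psi_t=\int_0^t\phi_{t-s}\circ M\circ[B_s(\cdot)B_s^\dag]\,\mathrm{d}s$ with $M=L-K(\cdot)-(\cdot)K^\dag$, which correctly reduces ``$\Psi_t$ is CP for all $t$'' to the single algebraic condition ``$M$ is CP'' (the ODE, the Duhamel formula, and both limiting arguments check out). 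This is a clean reduction, and your explicit Kraus identity for $M$ does verify the sufficiency direction, effectively subsuming the paper's separate (ii)$\Rightarrow$(i) computation.

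The gap is in the necessity half of that last step. You assert that $M$ CP forces $\tilde\Gamma=-(\i\varepsilon+\kappa/2)\oper_\e-\sqrt{\kappa}\sum_\mu c_\mu F_\mu$ with $\sum_\mu|c_\mu|^2\leq1$ ``by matching $M$ to the generic Kraus ansatz $\sum_\mu(F_\mu+\alpha_\mu\oper_\e)(\cdot)(F_\mu+\alpha_\mu\oper_\e)^\dag+\beta(\cdot)$ and identifying coefficients.'' That is circular: a completely positive map admits infinitely many Kraus representations, related by isometric mixing, and nothing a priori guarantees that the Kraus operators of $M$ can be taken of that special form---assuming the ansatz is assuming the conclusion. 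This is precisely where the paper does its real work (the unitary-freedom argument behind Lemma~\ref{lemma:ball}). To close the gap, test the Choi matrix of $M$, namely $\sum_\mu\ketbra{\hat F_\mu}{\hat F_\mu}-\ketbra{\hat{\tilde\Gamma}}{\hat\oper}-\ketbra{\hat\oper}{\hat{\tilde\Gamma}}$ with $\ket{\hat A}:=(A\otimes\oper_\e)\ket{\Psi_\e}$, on vectors orthogonal to all $\ket{\hat F_\mu}$: positivity of the resulting rank-two block forces $P\hat{\tilde\Gamma}=zP\hat\oper$ for some $z$ with $\Re z\leq0$, where $P$ projects onto the orthogonal complement of $\mathrm{Span}\{\hat F_1,\dots,\hat F_r\}$, i.e.\ $\tilde\Gamma=z\oper_\e+\sum_\mu d_\mu F_\mu$. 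Completing the square then gives $M=\sum_\mu(F_\mu-\bar d_\mu\oper_\e)(\cdot)(F_\mu-\bar d_\mu\oper_\e)^\dag+\left(\kappa-\sum_\mu|d_\mu|^2\right)(\cdot)$ with $\kappa=-2\Re z$, and a final Lemma~\ref{lemma:ball}-type argument is still needed to conclude that complete positivity forces a choice of $d_\mu$ with $\sum_\mu|d_\mu|^2\leq\kappa$. None of this is long, but it is the crux of (i)$\Rightarrow$(ii) and cannot be replaced by coefficient matching.
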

	We refer to Subsection~\ref{subsec:proof} for the proof. This theorem basically implies that, once a completely positive trace non-increasing semigroup on the excited sector $\phi_t:\bhe\rightarrow\bhe$, i.e.~$\phi_t = \e^{tL }$ with $L$ as defined in Eq.~\eqref{eq:gkls}, has been fixed, our freedom in constructing a completely positive excitation-damping semigroup $\mathsf{\Phi}_t$ consists entirely in the choice of
	\begin{itemize}
		\item the parameters $\varepsilon \in \mathbb{R}$, $\kappa \geq 0$, and $c_1,\dots,c_r\in\mathbb{C}$ with by $\sum_\mu |c_\mu|^2 \leq 1$ in Eq.~\eqref{eq:kappa};
		\item a completely positive map $\psi:\bhe\rightarrow\bhg$,
	\end{itemize}
	with the latter being further constrained by Eq.~\eqref{eq:tr} if one imposes the trace preserving condition. We stress that larger values of $\kappa$ yield a quicker decoherence phenomenon, with the coherence completely vanishing in the limit $\kappa\to\infty$.
	
	\begin{remark}\label{rem:ww}
		The case of trace preserving semigroups, i.e. those for which the additional condition~\eqref{eq:tr} holds, is of particular importance from a physical standpoint. Indeed, such a case, Theorem~\ref{thm:semigroups} characterize all possible ways to upgrade a completely positive, but generally only trace non-increasing semigroup $\phi_t$, into a CPTP semigroup $\mathsf{\Phi}_t$. In terms of master equations, this can be restated as follows: $\phi_t$ satisfies a master equation in the form
		\begin{equation}\label{eq:ww}
			\dot\phi_t(X_\ee)=-\i\left(H_{\rm eff}X_\ee-X_\ee H_{\rm eff}^\dag\right)+\sum_\mu\left[F_\mu X_\ee F_\mu^\dag-\frac{1}{2}\{F_\mu^\dag F_\mu,X_\ee\}\right]
		\end{equation}
		where
		\begin{equation}
			H_{\rm eff}=H-\frac{\i}{2}G,
		\end{equation}
		plays the role of an (effective) non-hermitian Hamiltonian; instead, $\mathsf{\Phi}_t$ satisfies
		\begin{equation}\label{eq:ww2}
			\dot{\mathsf{\Phi}}_t(\X)=-\i\left(\mathsf{H}\X-\X \mathsf{H}\right)+\sum_\mu\left[\mathsf{F}_\mu \X \mathsf{F}_\mu^\dag-\frac{1}{2}\{\mathsf{F}_\mu^\dag \mathsf{F}_\mu,\X\}\right],
		\end{equation}
		with a hermitian Hamiltonian: the enlargement of the Hilbert space allows us to describe a decay phenomenon (e.g. a particle decay) by means of a CPTP semigroup. As such, Theorem~\ref{thm:semigroups} generalizes the results derived in~\cite{Beatrix}, which are reproduced by taking $\varepsilon=\kappa=0$, $c_1=\ldots=c_r=0$ and $\psi = M(\cdot)M^\dagger$, where $G = M^\dagger M$ (cf.~Example~\ref{ex:tp3}).
	\end{remark}
	
	\begin{example}\label{ex:ww}
		In the framework depicted in Remark~\ref{rem:ww}, the simplest scenario corresponds to a Wigner--Weisskopf~\cite{WW} dynamics: in such a case, the evolution in the excited sector is governed by
		\begin{equation}\label{eq:ww1}
			\dot\phi_t(X_\ee)=-\i\left(H_{\rm eff}X_\ee-X_\ee H_{\rm eff}^\dag\right) ,
		\end{equation}
		where
		\begin{equation}
			H_{\rm eff}=H-\frac{\i}{2}G,
		\end{equation}
		plays the role of an (effective) non-hermitian Hamiltonian. One finds		
		\begin{equation}\label{}
			\phi_t(X_\ee) = A_t X_\ee A_t^\dagger ,
		\end{equation}
		with $A_t = \e^{-\i H_{\rm eff}t}$. Hence		
		\begin{equation}\label{}
			K = -\i H_{\rm eff} - \left(\i\varepsilon+\frac{\kappa}{2}\right)\oper_\e ,
		\end{equation}
		giving rise to		
		\begin{equation}\label{}
			B_t = \e^{Kt} = \e^{-\i \varepsilon t} \e^{- \kappa t/2} A_t .
		\end{equation}
		Finally,
		\begin{equation}\label{}
			\omega_t(X_\ee) = \int_0^t \mathrm{d}\tau \:\psi(A_\tau X_\ee A_\tau^\dagger)  .
		\end{equation}
		Importantly, even in this simple case, we still have a freedom on the construction of a semigroup $\mathsf{\Phi}_t$  in the choice of the parameters $\varepsilon\in\mathbb{R}$, $\kappa\geq0$ which enter the modulation of the coherence blocks (with, in particular, higher values of $\kappa$ corresponding to quicker decoherence phenomena) and especially in the choice of the map $\psi$ such that the total trace is preserved, cf.~Examples~\ref{ex:tp2}--\ref{ex:tp3}.
	\end{example}
	
	In view of Remark~\ref{rem:ww}, before proving the theorem, it will be useful to provide some examples of manifestly trace preserving excitation-damping semigroups.
	\begin{example}\label{ex:tp2}
		As an example of excitation-damping CPTP semigroup as in Eq.~\eqref{eq:phit}, fix $\Omega\in\bhg$ with $\tr\Omega=1$, and let
		\begin{equation}
			X_\ee\in\bhe\mapsto\psi(X_\ee):=-\tr L(X_\ee)\;\Omega=\left(\tr GX_\ee\right)\Omega,
		\end{equation}
		which clearly satisfies the condition for trace invariance. Correspondingly, the map $\omega_t:\bhe\rightarrow\bhg$ acts as such:
		\begin{eqnarray}
			\omega_t(X_\ee)&=&\psi\circ\int_0^t\mathrm{d}\tau\;\exp(\tau L)X_\ee=-\left[\int_0^t\mathrm{d}\tau\;\tr\left(L\exp(\tau L)X_\ee\right)\right]\Omega\nonumber\\
			&=&-\left[\int_0^t\mathrm{d}\tau\;\frac{\mathrm{d}}{\mathrm{d}\tau}\tr\left(\exp(\tau L)X_\ee\right)\right]\Omega=\tr\left[X_\ee-\exp(tL)X_\ee\right]\Omega,
		\end{eqnarray}
		which therefore belongs to the subclass of such models considered in Example~\ref{ex:tp}.
	\end{example}
	\begin{example}\label{ex:tp3}
		More generally, a possible choice to satisfy Eq.~\eqref{eq:tr}, thus obtaining a CPTP excitation-damping semigroup, is the following: writing $G = \sum_m M_m^\dagger M_m$ for some family of operators $\{M_m\}_m\subset\mathcal{B}(\hilbe,\hilbg)$, we set				
		\begin{equation}
			\psi = \mathcal{E}\circ\mathcal{M},\quad\text{where}\;\;\mathcal{M}=\sum_m M_m (\cdot) M_m^\dagger ,
		\end{equation}
		where $\mathcal{E} : \mathcal{B}(\mathcal{H}_\g) \to \mathcal{B}(\mathcal{H}_\g)$ is any completely positive and trace preserving map on the ground sector.	With this choice, clearly Eq.~\eqref{eq:tr}	is satisfied and a CPTP excitation-damping semigroup is obtained. The particular case of Example~\ref{ex:tp2} is recovered by choosing $\mathcal{E}=\tr(\cdot)\Omega$: indeed, in such a case,
		\begin{eqnarray}
			\psi(X_\ee)&=&\mathcal{E}\left(\sum_m M_mX_\ee M_m^\dag\right)=\tr\left(\sum_m M_mX_\ee M_m^\dag\right)\Omega\nonumber\\
			&=&\tr\left(\sum_m M_m^\dag M_mX_\ee \right)\Omega=\left(\tr GX_\ee\right)\Omega.
		\end{eqnarray}
	\end{example}
	
	\subsection{Proof of \texorpdfstring{Theorem~\ref{thm:semigroups}}{Theorem 3.1}}\label{subsec:proof}
	To prove Theorem~\ref{thm:semigroups}, it will be useful to start by first proving a necessary and sufficient condition for $\mathsf{\Phi}_t$ to be a (not necessarily completely positive) semigroup.
	\begin{lemma}\label{lemma:semigroups}
		Let $\mathsf{\Phi}_t$ as above, with $\phi_t$ being completely positive and all quantities being continuously differentiable as functions of $t$. Then $\mathsf{\Phi}_t$ satisfies the semigroup property and $\mathsf{\Phi}_0=\mathsf{id}$, if and only if there are a GKLS generator $L:\bhe\rightarrow\bhe$, a map $\psi:\bhe\rightarrow\bhg$, and an operator $K\in\bhe$, such that
		\begin{equation}\label{eq:easy}
			\phi_t=\exp(tL),\quad B_t=\exp(tK),
		\end{equation}
		and
		\begin{equation}\label{eq:lesseasy}
			\omega_t=\psi\circ\int_0^t\mathrm{d}\tau\,\exp(\tau L).
		\end{equation}
	\end{lemma}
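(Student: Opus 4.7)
The plan is to reduce the semigroup condition $\mathsf{\Phi}_{t+s}=\mathsf{\Phi}_t\circ\mathsf{\Phi}_s$ to three independent functional equations, one for each of $\phi_t$, $B_t$, and $\omega_t$, and then solve each of them. First, I would apply both sides of the composition law to a generic $\X$ partitioned as in Eq.~\eqref{eq:part} and compare the four blocks. The block structure of Eq.~\eqref{eq:phit} makes this straightforward: the $\ee$-block gives $\phi_t\circ\phi_s(X_\ee)$, the off-diagonal blocks give $B_tB_sX_\eg$ and its adjoint, and the $\gg$-block gives $X_\gg+\omega_s(X_\ee)+\omega_t(\phi_s(X_\ee))$. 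Matching these against the corresponding blocks of $\mathsf{\Phi}_{t+s}(\X)$ yields the three conditions
\[
\phi_{t+s}=\phi_t\circ\phi_s,\qquad B_{t+s}=B_tB_s,\qquad \omega_{t+s}=\omega_s+\omega_t\circ\phi_s,
\]
together with the initial data $\phi_0=\mathrm{id}$, $B_0=\oper_\e$, $\omega_0=0$ enforced by $\mathsf{\Phi}_0=\mathsf{id}$.

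For $\phi_t$ and $B_t$, I would invoke the standard fact that a continuously differentiable one-parameter semigroup on a finite-dimensional space is the exponential of its generator at the origin. This immediately gives $\phi_t=\exp(tL)$ and $B_t=\exp(tK)$, with $L:=\dot\phi_0\in\mathcal{B}(\bhe)$ and $K:=\dot B_0\in\bhe$. Since $\phi_t$ is assumed to be completely positive for every $t\geq 0$, the Gorini--Kossakowski--Lindblad--Sudarshan theorem forces $L$ to take the GKLS form displayed in Eq.~\eqref{eq:gkls}; no analogous restriction is imposed on $K$, because in this lemma $B_t$ alone is not required to satisfy any positivity-type condition.

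For $\omega_t$ the situation is slightly more subtle, since the mixed relation $\omega_{t+s}=\omega_s+\omega_t\circ\phi_s$ is not itself a semigroup law. The key step is to differentiate this relation in $t$ at $t=0$, which produces the linear inhomogeneous ODE
\[
\dot\omega_s=\psi\circ\exp(sL),\qquad \psi:=\dot\omega_0.
\]
Integrating from $0$ to $t$ under the initial condition $\omega_0=0$ gives exactly Eq.~\eqref{eq:lesseasy}. The converse direction is then a direct verification: the semigroup laws for $\exp(tL)$ and $\exp(tK)$ are automatic, while splitting the integral defining $\omega_{t+s}$ at $\tau=s$ and performing the substitution $\tau\mapsto\tau+s$ on the tail, together with the identity $\exp((s+\tau)L)=\exp(\tau L)\circ\exp(sL)$, reproduces the mixed law $\omega_{t+s}=\omega_s+\omega_t\circ\phi_s$.

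The main conceptual step is to recognize the mixed law for $\omega_t$ as a linear inhomogeneous equation driven by the semigroup $\phi_t$, admitting a Duhamel-type convolution solution; once this is identified, the rest is a routine application of finite-dimensional semigroup theory and of the GKLS structure theorem. No regularity beyond continuous differentiability is needed, which justifies the hypotheses of the lemma.
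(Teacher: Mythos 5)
Your proposal is correct and follows essentially the same route as the paper: blockwise reduction to the three functional equations, exponential representation of the two genuine semigroups, and differentiation of the mixed relation $\omega_{t+s}=\omega_s+\omega_t\circ\phi_s$ at $t=0$ to obtain $\dot\omega_s=\dot\omega_0\circ\exp(sL)$, integrated with $\omega_0=0$. Your explicit appeal to the GKLS theorem to give $L$ its canonical form is left implicit in the paper's proof but is exactly the intended justification.
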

	\begin{proof}
		By Eq.~\eqref{eq:phit}, a straightforward computation shows that $\mathsf{\Phi}_t\circ\mathsf{\Phi}_s=\mathsf{\Phi}_{t+s}$ for all $t,s\geq0$ and $\mathsf{\Phi}_0=\mathsf{id}$ if and only if, for all $t,s\geq0$, the following conditions hold:
		\begin{eqnarray}
			\phi_t\circ\phi_s=\phi_{t+s},&\;&\phi_0=\mathrm{id};\\
			B_tB_s=B_{t+s},&\;&B_0=\oper_\e;\\\label{eq:omega0}
			\omega_s+\omega_t\circ\phi_s=\omega_{t+s},&\;&\omega_0=0.
		\end{eqnarray}
		The first two conditions are clearly equivalent to $\phi_t,B_t$ being as in Eq.~\eqref{eq:easy}. We must prove that the latter condition~\eqref{eq:omega0} is satisfied if and only if there exists some $\psi:\bhe\rightarrow\bhg$ such that Eq.~\eqref{eq:lesseasy} holds. Now, if Eq.~\eqref{eq:lesseasy} holds, then an immediate computation shows that Eq.~\eqref{eq:omega0} holds. Vice versa, suppose that Eq.~\eqref{eq:omega0} holds for all $t,s\geq0$, that is,
		\begin{equation}
			\omega_{t+s}-\omega_s+\omega_s=\omega_t\circ\exp(sL).
		\end{equation}
		Dividing by $t$ and taking the limit $t\to0$, we get the following differential equation for $\omega_s$:
		\begin{equation}
			\dot\omega_s=\dot\omega_0\circ\exp(sL),
		\end{equation}
		with the initial condition $\omega_0=0$, which is uniquely solved by Eq.~\eqref{eq:lesseasy} with $\psi=\dot\omega_0$.
	\end{proof}
	
	\begin{proof}[Proof of Theorem~\ref{thm:semigroups}]
		(ii)$\implies$(i) Let $\phi_t=\exp(tL)$ with $L:\bhe\rightarrow\bhe$ being a GKLS generator on $\bhe$, $B_t=\exp(tK)$, and $\omega_t$ as in Eq.~\eqref{eq:omegat}. Lemma~\ref{lemma:semigroups} ensures that $\mathsf{\Phi}_t$ is a semigroup; it is therefore a Markovian semigroup if and only if can be written as $\mathsf{\Phi}_t=\exp(t\mathsf{L})$, with $\mathsf{L}:\bh\rightarrow\bh$ being a GKLS generator on $\bh$.
		
		By differentiating Eq.~\eqref{eq:phit}, for all $t\geq0$ we get $\dot{\mathsf{\Phi}}_t(\X)=\mathsf{L}\!\left(\mathsf{\Phi}_t(\X)\right)$, where
		\begin{equation}
			\mathsf{L}(\X)=\blockmatrix{L(X_\ee)}{KX_\eg}{X_\ge K^\dag}{\psi(X_\ee)}.
		\end{equation}
		Now, since $L$ is a GKLS generator, it admits the following decomposition:
		\begin{equation}
			L(X_\ee)=E(X_\ee)-\left[\Gamma X_\ee+X_\ee\Gamma^\dag\right],
		\end{equation}
		with $E:\bhe\rightarrow\bhe$ completely positive, and $\Gamma\in\bhe$ with $\Re\Gamma\succeq0$. A simple computation then shows that
		\begin{equation}
			\mathsf{L}(\X)=\mathsf{E}(\X)-\left[\mathsf{\Gamma}\X+\X\mathsf{\Gamma}^\dag\right],
		\end{equation}
		where
		\begin{eqnarray}\label{eq:e}
			\mathsf{E}(\X)&=&\blockmatrix{E(X_\ee)}{\left(K+\Gamma+\i\varepsilon+\frac{\kappa}{2}\right)X_\eg}{X_\eg\left(K+\Gamma+\i\varepsilon+\frac{\kappa}{2}\right)^\dag}{\kappa X_\gg+\psi(X_\ee)},\\
			\mathsf{\Gamma}&=&\blockmatrix{\Gamma+\i\varepsilon}{0}{0}{\frac{\kappa}{2}\oper_\g},
		\end{eqnarray}
		with the map in Eq.~\eqref{eq:e} being completely positive by Theorem~\ref{thm:cp} as long as $\psi$ is completely positive, $\kappa\geq0$, and the operator $K+\Gamma+\left(\i\varepsilon+\frac{\kappa}{2}\right)\oper_\g$ is a linear combination of any family of Kraus operators $F_1,\dots,F_r$ associated with the map $E:\bhe\rightarrow\bhe$:
		\begin{equation}
			K+\Gamma+\left(\i\varepsilon+\frac{\kappa}{2}\right)\oper_\e=\sum_{\mu=1}^r\tilde{c}_\mu F_\mu,\qquad\sum_{\mu=1}^r|\tilde{c}_\mu|^2\leq\kappa,
		\end{equation}
		or equivalently, defining $c_\mu:=\tilde{c}_\mu/\sqrt{\kappa}$,
		\begin{equation}
			K=-\Gamma-\left(\i\varepsilon+\frac{\kappa}{2}\right)\oper_\e+\sqrt{\kappa}\,\sum_{\mu=1}^rc_\mu F_\mu,\qquad\sum_{\mu=1}^r|c_\mu|^2\leq1.
		\end{equation}
		The claim therefore follows by identifying $\Gamma$ with the operator $\i H+\frac{1}{2}\left(G+\sum_{\mu=1}^rF_\mu^\dag F_\mu\right)$.
		
		(i)$\implies$(ii). First of all, in order for $\mathsf{\Phi}_t$ to be completely positive, by Theorem~\ref{thm:cp} necessarily $\phi_t$ must be completely positive for all $t$. Besides, since $\mathsf{\Phi}_t$ is a semigroup, by Lemma~\ref{lemma:semigroups} necessarily we must have
		\begin{equation}
			\phi_t=\exp(tL),\qquad B_t=\exp(tK),\qquad\omega_t=\psi\circ\int_0^t\mathrm{d}\tau\;\exp(\tau L)
		\end{equation}
		for some GKLS generator $L:\bhe\rightarrow\bhe$ as in Eq.~\eqref{eq:gkls}, some $K\in\bhe$, and some map $\psi:\bhe\rightarrow\bhg$. Again by Theorem~\ref{thm:cp}, $\omega_t$ must be completely positive at all times, which implies that $\psi$ is completely positive.
		
		Finally, Theorem~\ref{thm:cp} requires $\phi_t-B_t(\cdot)B_t^\dag$, and thus $\phi_t-B_t(\cdot)B_t^\dag$ to be a completely positive operator. Now, since $\phi_t$ is a Markovian semigroup with GKLS generator given by Eq.~\eqref{eq:gkls}, the map $\phi_t$ admits a family of Kraus operators $A_0(t),A_1(t),\dots,A_r(t)$ satisfying
		\begin{eqnarray}
			A_0(t)&=&\oper_\e-t\left[\i H+\frac{1}{2}\left(G+\sum_{\mu=1}^rF_\mu^\dag F_\mu\right)\right]+\mathcal{O}(t^2);\\
			A_\mu(t)&=&\sqrt{t}\,F_\mu+\mathcal{O}(t),\;\;\mu=1,\dots,r,
		\end{eqnarray}
		which, in particular, imply for all $t>0$
		\begin{eqnarray}\label{eq:diff}
			\dot A_0(t)&=&-\left[\i H+\frac{1}{2}\left(G+\sum_{\mu=1}^rF_\mu^\dag F_\mu\right)\right]+\mathcal{O}(t);\\\label{eq:diff2}
			\dot A_\mu(t)&=&\frac{1}{2\sqrt{t}}F_\mu+\mathcal{O}(1),\;\;\mu=1,\dots,r.
		\end{eqnarray}
		Since $\phi_t-B_t(\cdot)B_t^\dag$ must be completely positive, recalling Lemma~\ref{lemma:ball} we must have, for all $t\geq0$,
		\begin{equation}\label{eq:cond}
			B_t=\exp(tK)=\sum_{\mu=0}^r\beta_\mu(t)A_\mu(t),\qquad\sum_{\mu=0}^r|\beta_\mu(t)|^2\leq1
		\end{equation}
		for some $\beta_0(t),\dots,\beta_r(t)\in\mathbb{C}$, with the right-hand side of Eq.~\eqref{eq:cond} necessarily being differentiable. Eq.~\eqref{eq:cond} at $t=0$ implies $\beta_0(0)=1$ and $\beta_\mu(0)=0$ for all $\mu=1,\dots,r$. Besides, differentiating Eq.~\eqref{eq:cond}, we get
		\begin{equation}
			K\exp(tK)=\dot\beta_0(t)A_0(t)+\beta_0(t)\dot A_0(t)+\sum_{\mu=1}^r\left[\dot\beta_\mu(t)A_\mu(t)+\beta_\mu(t)\dot A_\mu(t)\right],
		\end{equation}
		and, as $t\to0$,
		\begin{equation}\label{eq:cond2}
			K=\lim_{t\to0}\dot\beta_0(t)\oper_\e-\left[\i H+\frac{1}{2}\left(G+\sum_{\mu=1}^rF_\mu^\dag F_\mu\right)\right]+\sum_{\mu=1}^r\lim_{t\to0}\left[\sqrt{t}\dot\beta_\mu(t)+\frac{\beta_\mu(t)}{2\sqrt{t}}\right]F_\mu.
		\end{equation}
		Necessarily,
		\begin{equation}
			\lim_{t\to0}\dot\beta_0(t)=:-\i\varepsilon-\frac{\kappa}{2}
		\end{equation}
		is finite, and we must have $\beta_\mu(t)=c_\mu\sqrt{t}+o(\sqrt{t})$ for some $c_\mu\in\mathbb{C}$. Finally, the condition $\sum_{\mu=0}^r|\beta_\mu(t)|^2\leq1$ at small times implies
		\begin{equation}
			\sum_{\mu=1}^r|c_\mu|^2\leq\kappa
		\end{equation}
		with $\kappa\geq0$; Eq.~\eqref{eq:kappa} follows by rescaling the $c_\mu$s properly, finally proving the claim.
	\end{proof}
	
	\subsection{CP-divisible excitation-damping channels}\label{subsec:cpdiv}	
	Let us recall that a dynamical map $\mathsf{\Phi}_t$ is divisible whenever, for all $t \geq s\geq0$, the following decomposition	
	\begin{equation}
		\mathsf{\Phi}_t = \mathsf{\Phi}_{t,s}\circ\mathsf{\Phi}_s ,
	\end{equation}
	holds true for some family of maps (\textit{propagators}) $\mathsf{\Phi}_{t,s}:\bh\rightarrow\bh$; if so, one calls $\mathsf{\Phi}_t$ CP-divisible if the family of propagators $\mathsf{\Phi}_{t,s}$ is CPTP, and P-divisible if $\mathsf{\Phi}_{t,s}$ is positive and trace preserving.
	
	We shall restrict ourselves to the case of \textit{invertible} processes: such processes are divisible, with $\mathsf{\Phi}_{t,s} = \mathsf{\Phi}_{t}\circ\mathsf{\Phi}_{s}^{-1}$. In the case of excitation-damping maps, we know from Proposition~\ref{prop:inv} that $\mathsf{\Phi}_t$ is invertible if and only if $\phi_t$ and $B_t$ are invertible, with
	\begin{equation}\label{eq:phinv0}
		\mathsf{\Phi}_t^{-1}(\X)=\blockmatrix{\phi_t^{-1}(X_\ee)}{B_t^{-1}X_\eg}{X_\ge B_t^{\dag-1}}{X_\gg-\omega_t\circ \phi_t^{-1}(X_\ee) };
	\end{equation}
	in such a case $\mathsf{\Phi}_{t}$ satisfies a time-local master equation $\dot{\mathsf{\Phi}}_t=\mathsf{L}_t\circ\mathsf{\Phi}_t$, and the corresponding time-local generator $\mathsf{L}_t$ is defined via
	\begin{equation}
		\mathsf{L}_t=\dot{\mathsf{\Phi}}_t\circ\mathsf{\Phi}_t^{-1};
	\end{equation}
	a straightforward computation leads to	
	\begin{equation}\label{eq:blockgkls}
		\mathsf{L}_t(\X) = \blockmatrix{L_t(X_\ee)}{K_tX_\eg}{X_\ge K_t^\dag}{\psi_t(X_\ee)},
	\end{equation}
	where $L_t:\bhe\rightarrow\bhe$, $K_t\in\bhe$, and $\psi_t:\bhe\rightarrow\bhg$ are defined via
	\begin{eqnarray}\label{eq:omegat2}
		L_t&=&\dot{\phi}_t\circ\phi_t^{-1};   \label{Lt}\\
		K_t&=&\dot B_tB_t^{-1};    \label{Kt}\\
		\psi_t&=& \dot\omega_t\circ\phi_t^{-1}.
	\end{eqnarray}
	Such relations can be (formally) inverted as follows:
	\begin{eqnarray}
		\phi_t = \mathcal{T}\exp\left(\int_0^t\mathrm{d}\tau\;L_\tau\right),\qquad B_t = \mathcal{T}\exp\left(\int_0^t\mathrm{d}\tau\;K_\tau\right),
	\end{eqnarray}
	with $\mathcal{T}$ denoting the time-ordered product. Solving Eq.~\eqref{eq:omegat2} for $\omega_t$, one finds
	\begin{equation}\label{p1}
		\omega_t=\int_0^t \mathrm{d}\tau\; \psi_\tau\circ\phi_\tau.
	\end{equation}
	The evolution is CP-divisible if and only if the family of propagators $\mathsf{\Phi}_{t,s} = \mathsf{\Phi}_t\circ \mathsf{\Phi}_s^{-1}$ is CPTP for all pairs $t \geq s$. One finds
	\begin{equation}
		\mathsf{\Phi}_{t,s}(\X)=\blockmatrix{\phi_{t,s}(X_\ee)}{B_{t,s} X_\eg}{X_\ge B_{t,s}^{\dag}}{X_\gg + \omega_{t,s}(X_\ee) };
	\end{equation}
	where	
	\begin{eqnarray}
		\phi_{t,s} &=& \phi_t\circ\phi_s^{-1} = \mathcal{T}\exp\left(\int_s^t\mathrm{d}\tau\;L_\tau\right)\ ; \\
		B_{t,s} &=& B_t B_s^{-1} = \mathcal{T}\exp\left(\int_s^t\mathrm{d}\tau\; K_\tau\right),
	\end{eqnarray}
	and	
	\begin{equation}
		\omega_{t,s} = (\omega_t - \omega_s)\circ\phi_s^{-1} .
	\end{equation}
	Using Eq.~\eqref{p1} one obtains	
	\begin{equation}
		\omega_{t,s} = \int_s^t \mathrm{d}\tau\, \psi_\tau\circ\phi_{\tau,s}  =  \int_s^t \mathrm{d}\tau\, \psi_\tau \circ\left\{ \mathcal{T}\exp\left(\int_s^\tau \mathrm{d} u\;L_u\right) \right\} \ ,
	\end{equation}
	which, in the homogeneous case, reduces to	
	\begin{equation}
		\omega_{t - s} = \int_s^t \mathrm{d}\tau\, \psi\circ\phi_{\tau- s}  = \psi\circ\int_0^{t-s} \mathrm{d}\tau\,  \phi_{\tau} = \psi\circ\int_0^{t-s} \mathrm{d}\tau\, \exp\left(\tau L\right) .
	\end{equation}
	Consequently, we obtain the following generalization of Theorem~\ref{thm:semigroups}:	
	\begin{theorem}\label{thm:cpdiv}
		For all $t\geq0$, let $\phi_t:\bhe\rightarrow\bhe$, $\omega_t:\bhe\rightarrow\bhg$, and $B_t\in\bhe$, with all functions being continuously differentiable; let $\mathsf{\Phi}_t:\bh\rightarrow\bh$ as in Eq.~\eqref{eq:phit}. The following statements are equivalent:
		\begin{itemize}
			\item[(i)] $\mathsf{\Phi}_t$ is a CP-divisible dynamical map,
			\item[(ii)] the following conditions hold:
			\begin{itemize}
				\item[$\bullet$] $L_t$, defined in Eq.~\eqref{Lt}, is a trace non-increasing GKLS generator for all $t \geq 0$, that is,
				\begin{equation}\label{eq:gkls2}
					L_t =-\i[H_t,\cdot]-\frac{1}{2}\{G_t,\cdot\}+\sum_{\mu=1}^r\left(F_{\mu,t}(\cdot)F_{\mu,t}^\dag-\frac{1}{2}\{F_{\mu,t}^\dag F_{\mu,t},\cdot\}\right) ,
				\end{equation}
				for some time dependent operators $H_t,G_t,F_{1,t},\dots,F_{r,t}\in\bhe$, with $H_t=H_t^\dag$ and $G_t =G_t^\dag\succeq0$;
				\item[$\bullet$] $K_t$, defined in Eq.~\eqref{Kt}, reads
				\begin{equation}\label{eq:kappa2}
					K_t =-\i H_t -\frac{1}{2}\left(G_t +\sum_{\mu=1}^rF_{\mu,t}^\dag F_{\mu,t} \right)-\left(\i\varepsilon_t +\frac{\kappa_t}{2}\right) \oper_\e-\sqrt{\kappa_t}\,\sum_{\mu=1}^rc_{\mu,t} F_{\mu,t}
				\end{equation}
				for some $\varepsilon_t\in\mathbb{R}$, $\kappa_t\geq0$ and some $c_{1,t},\dots,c_{r,t}\in\mathbb{C}$ satisfying $\sum_{\mu=1}^r|c_{\mu,t}|^2\leq1$;
				\item[$\bullet$] $\omega_t$ is defined in Eq.~\eqref{p1} for some completely positive map $\psi_t :\bhe\rightarrow\bhg$.
		\end{itemize}\end{itemize}
		Furthermore, $\mathsf{\Phi}_t$ is a completely positive and trace preserving channel if and only if, in addition, $\tr L_t(X_\ee)+\tr\psi_t(X_\ee)=0$, or equivalently		
		\begin{equation}\label{eq:tr2}
			\tr\psi_t(X_\ee)=\tr G_tX_\ee ,
		\end{equation}
		for all $X_\ee\in\bhe$ and all $t\geq0$.
	\end{theorem}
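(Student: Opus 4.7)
The plan is to reduce the statement to a pointwise-in-time application of Theorem~\ref{thm:semigroups}, exploiting the fact that, by the preliminary computation preceding the theorem, the propagators $\mathsf{\Phi}_{t,s}=\mathsf{\Phi}_t\circ\mathsf{\Phi}_s^{-1}$ are themselves excitation-damping maps with $\gamma_{t,s}=1$ and with blocks $\phi_{t,s}$, $B_{t,s}$, $\omega_{t,s}$ given explicitly by the formulas already derived. CP-divisibility is therefore equivalent to each such propagator being completely positive, which by Theorem~\ref{thm:cp} amounts to $\omega_{t,s}$ being completely positive and $\phi_{t,s}-B_{t,s}(\cdot)B_{t,s}^\dag$ being completely positive for every $t\geq s\geq 0$. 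The conditions on the time-local generator blocks $L_t$, $K_t$, $\psi_t$ will then be obtained by passing to the infinitesimal limit $t\to s^+$.

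For the implication (i)$\implies$(ii), I would first show that $\psi_t$ is completely positive at every $t$. Since $\omega_{t,s}=\int_s^t\mathrm{d}\tau\,\psi_\tau\circ\phi_{\tau,s}$, the difference quotient $(t-s)^{-1}\omega_{t,s}$ is completely positive for every $t>s$ (as the mean of CP maps), and tends to $\psi_s$ as $t\to s^+$; since the cone of completely positive maps is closed, $\psi_s$ is CP. Next, the conditions on $L_t$ and $K_t$ follow by carrying out, locally around each fixed $s$, the very same asymptotic expansion used in the proof of the implication (i)$\implies$(ii) of Theorem~\ref{thm:semigroups}: since $\phi_{t,s}=\mathcal{T}\exp(\int_s^t L_\tau\,\mathrm{d}\tau)$ is a CP map for $t>s$ close to $s$, a standard short-time analysis yields that $L_s$ has the GKLS form~\eqref{eq:gkls2}, and the complete positivity of $\phi_{t,s}-B_{t,s}(\cdot)B_{t,s}^\dag$ together with Lemma~\ref{lemma:ball} forces $B_{t,s}$ to be a convex-like combination of the Kraus operators of $\phi_{t,s}$; differentiating this constraint at $t=s^+$, exactly as in Eqs.~\eqref{eq:diff}--\eqref{eq:cond2}, produces the decomposition~\eqref{eq:kappa2} for $K_s$.

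For the converse (ii)$\implies$(i), I would assemble the time-local generator~\eqref{eq:blockgkls} and show that it admits, at each $t$, the block decomposition $\mathsf{L}_t(\X)=\mathsf{E}_t(\X)-[\mathsf{\Gamma}_t\X+\X\mathsf{\Gamma}_t^\dag]$ with $\mathsf{E}_t$ completely positive and $\Re\mathsf{\Gamma}_t\succeq 0$, mirroring Eqs.~(3.26)--(3.27) of the proof of Theorem~\ref{thm:semigroups}. The hypothesis~\eqref{eq:kappa2} ensures that the off-diagonal block of $\mathsf{E}_t$ is generated by an operator lying in the span of the Kraus operators $F_{\mu,t}$ of the CP map $E_t$ associated with $L_t$, with coefficient norm bounded by $\kappa_t$; together with the complete positivity of $\psi_t$, Theorem~\ref{thm:cp} then yields $\mathsf{E}_t$ completely positive. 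Hence $\mathsf{L}_t$ is a genuine (generally trace non-increasing) time-dependent GKLS generator on $\bh$, from which CP-divisibility follows. The final trace-preservation statement~\eqref{eq:tr2} is then read off, exactly as in the semigroup case, by imposing $\tr\mathsf{L}_t(\X)=0$ and combining it with~\eqref{eq:gkls2}.

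The main obstacle I anticipate lies in the forward direction, specifically in turning the condition ``$\phi_{t,s}-B_{t,s}(\cdot)B_{t,s}^\dag$ CP for all $t\geq s$'' into an infinitesimal constraint on $K_s$: the Kraus operators of $\phi_{t,s}$ now depend on \emph{both} $t$ and $s$, so the short-time expansions analogous to Eqs.~\eqref{eq:diff}--\eqref{eq:diff2} must be controlled uniformly in $s$ on compact intervals, and the coefficients $\beta_\mu(t,s)$ of $B_{t,s}$ in the Kraus basis must be shown to possess the $\sqrt{t-s}$-type expansion that underlies~\eqref{eq:kappa2}. Once this regularity is secured, the rest of the argument is a direct transposition of the techniques developed for Theorem~\ref{thm:semigroups}.
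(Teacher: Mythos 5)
Your proposal is correct and follows essentially the same route as the paper, which in fact dispenses with a written proof entirely, stating only that it ``goes exactly on the same line as the proof of Theorem~\ref{thm:semigroups}'' after deriving the explicit propagator blocks $\phi_{t,s}$, $B_{t,s}$, $\omega_{t,s}$ that you also use. Your rendering is more detailed than the paper's, and the technical point you flag---controlling the $(t,s)$-dependent Kraus expansion of $\phi_{t,s}$ uniformly so that the $\sqrt{t-s}$-asymptotics of the coefficients of $B_{t,s}$ yield Eq.~\eqref{eq:kappa2}---is precisely the step the paper leaves implicit.
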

	The proof goes exactly on the same line as the proof of Theorem~\ref{thm:semigroups}. As in the semigroup case, our freedom in constructing a CP-divisible excitation-damping map resides in the choice of
	\begin{itemize}
		\item the time-dependent parameters $\varepsilon_t \in \mathbb{R}$, $\kappa_t \geq 0$, and $c_{1,t},\dots,c_{r,t}\in\mathbb{C}$ with by $\sum_\mu |c_{\mu,t}|^2 \leq 1$ in Eq.~\eqref{eq:kappa};
		\item a time-dependent completely positive map $\psi_t:\bhe\rightarrow\bhg$,
	\end{itemize}
	with the latter being further constrained by Eq.~\eqref{eq:tr2} if one imposes the trace preserving condition. Some final considerations are in order.
	
	\begin{remark}\label{rem:ww-t}
		Reprising the discussion in Remark~\ref{rem:ww}, let us focus again on the trace preserving scenario. Again, in such a case the result can be restated as follows: the map $\phi_t$, which satisfies a generalization of Eq.~\eqref{eq:ww}:
		\begin{equation}\label{eq:ww-t}
			\dot\phi_t(X_\ee)=-\i\left(H_{\rm{eff},t}X_\ee-X_\ee H_{\rm{eff},t}^\dag\right)+\sum_\mu\left[F_{\mu,t} X_\ee F_{\mu,t}^\dag-\frac{1}{2}\{F_{\mu,t}^\dag F_{\mu,t},X_\ee\}\right]
		\end{equation}
		with a \textit{time-dependent} effective non-hermitian Hamiltonian $H_{\rm{eff},t}$, is upgraded to a map $\mathsf{\Phi}_t$ satisfying a generalization of Eq.~\eqref{eq:ww2}:
		\begin{equation}\label{eq:ww2-t}
			\dot{\mathsf{\Phi}}_t(\X)=-\i\left(\mathsf{H}_t\X-\X \mathsf{H}_t\right)+\sum_\mu\left[\mathsf{F}_{\mu,t} \X \mathsf{F}_{\mu,t}^\dag-\frac{1}{2}\{\mathsf{F}_{\mu,t}^\dag \mathsf{F}_{\mu,t},\X\}\right],
		\end{equation}
		with a \textit{time-dependent} hermitian Hamiltonian $\mathsf{H}_t$.
	\end{remark}
	
	\begin{example}\label{ex:tp4}			
		A simple scenario ensuring trace preservation corresponds again to the choices $\varepsilon_t=\kappa_t=0$, $c_{1,t}=\ldots=c_{r,t}=0$, and $\psi_t(X_\ee) = \mathcal{E}_t(M_t X_\ee M_t^\dagger)$, where $G_t=M_t^\dag M_t$ and $\mathcal{E}_t:\bhg\rightarrow\bhg$ is a family of trace preserving maps such that				
		\begin{equation}
			\omega_t(X_\ee) = \int_0^t \mathrm{d}\tau\: \mathcal{E}_\tau (M_\tau \phi_\tau(X_\ee) M^\dagger_\tau) ,
		\end{equation}
		is completely positive. In this case, one finds
		\begin{eqnarray}
			\dot{X}_{\ee,t} &=& -\i[H_t,X_{\ee,t}]-\frac{1}{2}\{G_t,X_{\ee,t}\}+\sum_{\mu=1}^r\left(F_{\mu,t}X_{\ee,t} F_{\mu,t}^\dag-\frac{1}{2}\{F_{\mu,t}^\dag F_{\mu,t},X_{\ee,t}\}\right)   \ , \\
			\dot{X}_{\eg,t} &=& \left[ -\i H_t -\frac{1}{2}\left(G_t +\sum_{\mu=1}^rF_{\mu,t}^\dag F_{\mu,t} \right)\right] X_{\eg,t}
		\end{eqnarray}
		and
		\begin{equation}
			X_{\gg,t} = \int_0^t \mathrm{d}\tau \:\mathcal{E}_\tau (M_\tau X_{\ee,\tau} M^\dagger_\tau) ,
		\end{equation}
		Clearly, whenever $\mathcal{E}_\tau $ is completely positive, then the map				
		\begin{equation}
			\psi_\tau(X_\ee) = \mathcal{E}_\tau (M_\tau X_\ee M^\dagger_\tau)
		\end{equation}
		is completely positive and hence $\mathsf{\Phi}_t$ is a CP-divisible dynamical map. However, when $\mathcal{E}_\tau $ is not completely positive but $\omega_t$ is completely positive for all $t\geq0$, then  $\mathsf{\Phi}_t$ is a legitimate map but not CP-divisible.
		
		This shows that a CP-divisible trace non-increasing map $\phi_t$ can be extended to a non-CP-divisible trace preserving map $\mathsf{\Phi}_t$. If we identify the CP-divisibility property as a mathematical description of Markovianity, this provides an example of a non-Markovian quantum system which, nevertheless, behaves as a Markovian one on the excited sector. Likewise, when $L_t=L$ and $K_t=K$ are time independent and hence characterized by Theorem~\ref{thm:semigroups}, one may still have a time dependent $\psi_t$; therefore, a semigroup $\phi_t$ can be extended to a CP-divisible evolution which, however, fails to satisfy the semigroup property. 	
		
		Finally, generalizing the discussion in Example~\ref{ex:tp3}, a CP-divisible and trace preserving excitation-damping map can be obtained as follows: writing $G_t = \sum_m M_{m,t}^\dagger M_{m,t}$ for some time-dependent family of operators $\{M_{m,t}\}_m\subset\mathcal{B}(\hilbe,\hilbg)$, we set
		\begin{equation}
			\psi_t = \mathcal{E}_t\circ\mathcal{M}_t,\quad\text{where}\;\;\mathcal{M}_t=\sum_m M_{m,t} (\cdot) M_{m,t}^\dagger ,
		\end{equation}
		with $\mathcal{E}_t:\mathcal{B}(\mathcal{H}_\g)\to\mathcal{B}(\mathcal{H}_\g)$ being an arbitrary quantum channel (a CPTP map).
	\end{example}
	
	\section{Conclusions and outlooks}\label{III}	
	In this work we have introduced and studied a class of quantum operations describing excitation-damping phenomena, that is, one-way transfer of probability between two sectors of the Hilbert space of a quantum system, an excited and a ground one. Such maps may be interpreted as generalizations the well-known amplitude-damping channel, as well as of their extension to multilevel excited sectors studied in~\cite{Davide-1}, and can be considered as a way of upgrading a trace non-increasing map (quantum operation) to a possibly trace preserving map on a larger Hilbert space, by adding additional degrees of freedom playing the role of a ground sector.
	
	For such maps, a simple characterization of complete positivity has been obtained; besides, in the case of one-dimensional ground sectors, complete positivity turns out to be equivalent to the (generally much weaker) positivity. This nontrivial feature generalizes what already observed for amplitude-damping and phase-damping qubit channels, as well as their generalizations studied respectively in~\cite{Davide-1} and~\cite{Davide-2}, thus showing that the equivalence between the two properties does indeed hold for a wider class of maps.
	
	Furthermore, the time-dependent scenario has been examined; a characterization of all dynamical semigroups, and more generally all invertible and CP-divisible maps, belonging to the class of excitation-damping channels has been obtained. This characterization involves an explicit expression for all admissible generators, which must be carefully chosen in such a way to ensure complete positivity at all times.
	
	Future research will be devoted to investigating other mathematical and physical properties of excitation-damping maps, as well as their practical implementation, and to study the time-dependent case via a memory kernel approach.
	
	\section*{Acknowledgments}	
	D.L. was partially supported by Istituto Nazionale di Fisica Nucleare (INFN) through the project “QUANTUM” and by the Italian National Group of Mathematical Physics (GNFM-INdAM), and acknowledges support by MIUR via PRIN 2017 (Progetto di Ricerca di Interesse Nazionale), project QUSHIP (2017SRNBRK); he also thanks the Institute of Physics at the Nicolaus University in Toru\'n for its hospitality. D.C. was supported by the Polish National Science Center Project No. 2018/30/A/ST2/00837.

\end{document}